\numberwithin{equation}{section}
\newenvironment{SMOGAlgorithm}[1][htb]
  {
   \begin{algorithm}[#1]%
  }{\end{algorithm}}
\tikzstyle{line} = [draw, -latex']
\newcommand{\norm}[1]{\lVert#1\rVert}
\definecolor{mygray}{RGB}{211,211,211}
\newtheorem{theorem}{Theorem}
\newtheorem{lemma}{Lemma}
\newtheorem{remark}{Remark}
\DeclareMathOperator*{\argmin}{arg\,min}
\newcolumntype{C}[1]{>{\centering\arraybackslash}p{#1}} 
\title{Structural modeling using overlapped group penalties for discovering predictive biomarkers for subgroup analysis}
\author[1]{Chong Ma}
\author[1]{Wenxuan Deng}
\author[1]{Shuangge Ma}
\author[2]{Ray Liu}
\author[2]{Kevin Galinsky \thanks{Corresponding: Kevin.Galinsky@takeda.com}}
\affil[1]{Department of Biostatistics, School of Public Health, Yale University, New Haven, CT, 06510}
\affil[2]{Advanced Analytics and Statistical Consultation, Takeda Pharmaceuticals, Cambridge, MA 02139}
\date{}
\begin{document}

\maketitle

\begin{abstract}

The identification of predictive biomarkers from a large scale of covariates for subgroup analysis has attracted fundamental attention in medical research. In this article, we propose a generalized penalized regression method with a novel penalty function, for enforcing the hierarchy structure between the prognostic and predictive effects, such that a nonzero predictive effect must induce its ancestor prognostic effects being nonzero in the model. Our method is able to select useful predictive biomarkers by yielding a sparse, interpretable, and predictable model for subgroup analysis, and can deal with different types of response variable such as continuous, categorical, and time-to-event data. We show that our method is asymptotically consistent under some regularized conditions. To minimize the generalized penalized regression model, we propose a novel integrative optimization algorithm by integrating the majorization-minimization and the alternating direction method of multipliers, which is named after \texttt{smog}. The enriched simulation study and real case study demonstrate that our method is very powerful for discovering the true predictive biomarkers and identifying subgroups of patients.

\noindent \sloppy 
{\textbf {Availability:}} The program is available as an R package and can be downloaded from 
GitHub (\href{https://github.com/chongma1989/smog}{https://github.com/chongma1989/smog}).\\

\noindent
{\textbf {Key Words:}} Hierarchy Structure; Optimization Algorithm; Penalized Regression Model; Predictive Biomarkers; Subgroup Analysis.\\
    
\end{abstract}


\section{Introduction}\label{sec:Intro}
Subgroup analysis has attracted considerable attention in drug development for identifying patient subgroups that benefit from a greater treatment effect in various diseases \citep{lipkovich2017tutorial}.
Patient subgroups can be determined by individual baseline characteristics such as demographic, clinical, and genomic covariates, which we refer to as biomarkers.
These biomarkers can be categorized as \textit{predictive} or \textit{prognostic}.
Prognostic biomarkers are defined as those that predict the trajectory of the disease \textit{in the absence of treatment},
while predictive biomarkers are defined as those that predict the trajectory of the disease \textit{in response to a particular treatment} \citep{ruberg2015personalized}.

Consider, for example, the head and neck squamous cell carcinoma (HNSCC) data from The Cancer Genome ATLAS (TCGA) study. which contain the overall survival time in months and the whole genome expression profiles of the tumor sample for each of the 510 HNSCC patients. The patients are dichotomized by whether or not have received radiation therapy, in which 301 patients are with radiation treatment and 209 without. Among the 510 HNSCC patients, 211 are alive and 299 deceased. Note that the gene expressions were obtained from the tumor samples of the patients who did not have neo-adjuvant therapy at the beginning of the cohort study. Figure~\ref{fig:hnsc_surv0} displays the Kaplan-Meier curves for the whole HNSCC data, and the biomarker positive subgroup of patients identified by using the proposed method in this article, respectively. Note that the treatment effect for the whole HNSCC data is statistically significant with p-value $=0.005$, though it might not be clinically significant. Therefore, it is intriguing to investigate if there exist predictive genomic biomarkers that can be used to identify a subgroup of patients who benefit more treatment effect. 

\begin{figure}[hp]
    \centering
    \includegraphics[width=0.5\textwidth,height=2in]{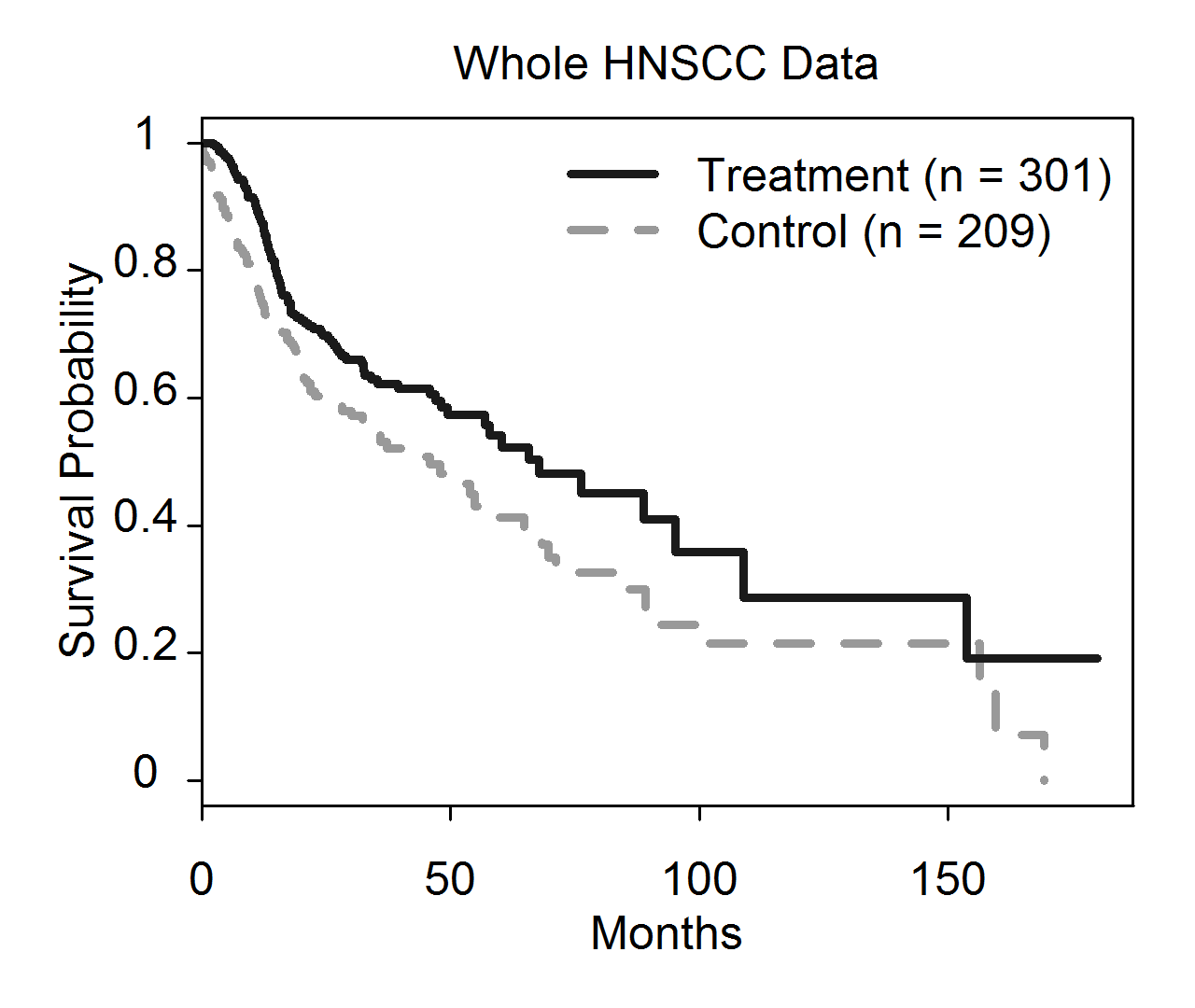}%
    \includegraphics[width=0.5\textwidth,height=2in]{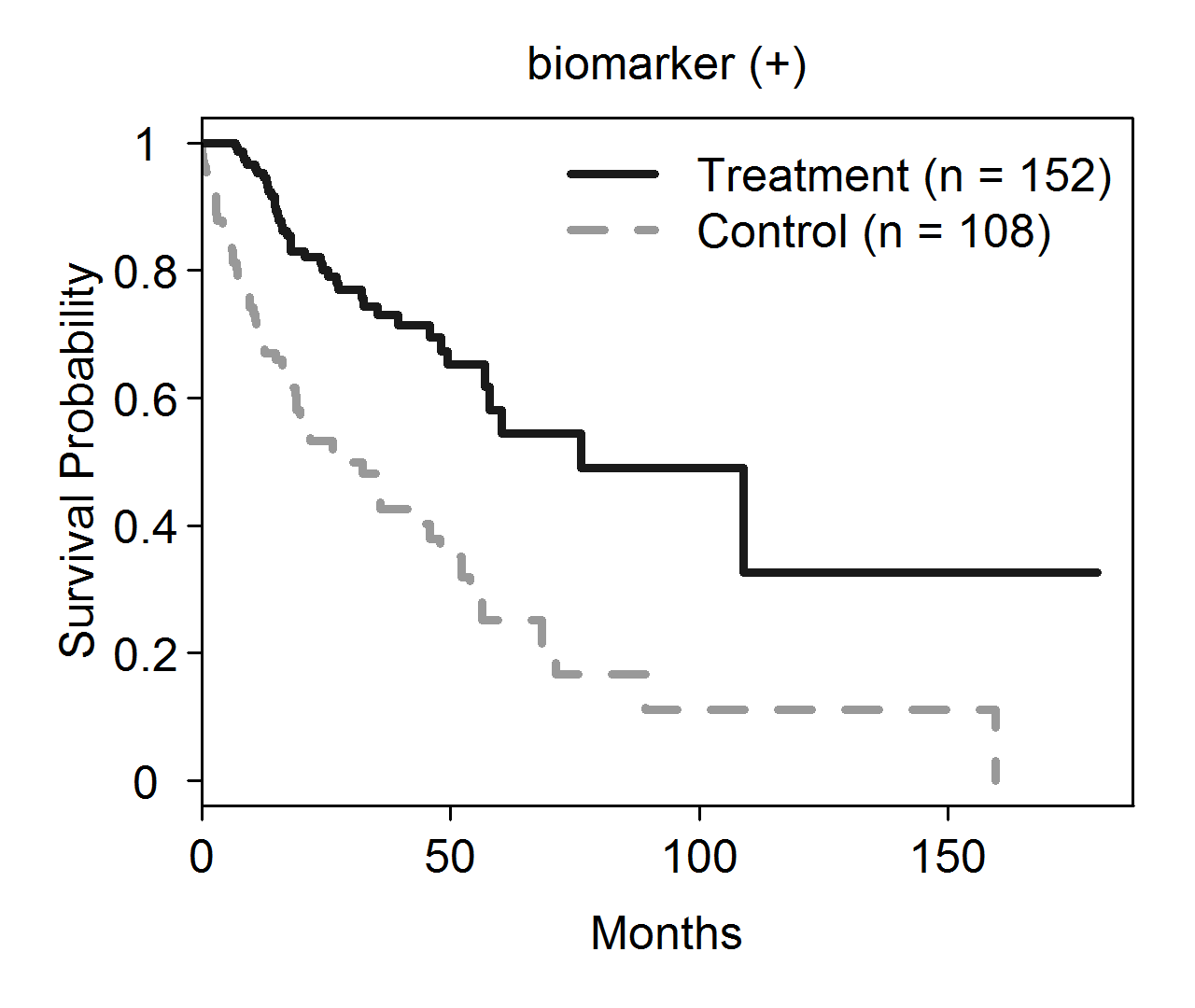}\\
    \caption{Kaplan-Meier curves for the whole HNSCC data and the positive biomarker subgroup (biomarker (+)) identified by using the proposed method, respectively. The 95\% confidence intervals of relative risks for the radiation treatment versus no radiation treatment are $(0.52,0.89)$ and $(0.23,0.51)$, accordingly.}
    \label{fig:hnsc_surv0}
\end{figure}


The subgroup analysis and biomarker investigation mainly underlie the variable selection methods by selecting some potential variables from a large scale of candidate covariates and their interactions with treatment. Penalized regression methods have been utilized substantially for biomarker study by modeling the outcome as some link function of the prognostic effects of baseline covariates and their predictive effects, while imposing on specific penalty function of the coefficients of the predictor variables for different goals \citep{imai2013estimating,zhao2013effectively}. Note that the prognostic effects refer to as the main effects of the baseline covariates, and the predictive effects as the interaction effects of the baseline covariates with the treatment indicator. 

In the general topic of the penalized regression methods, there exists considerably extensive literature for solving various pragmatic and theoretic problems \citep{tibshirani1996regression,yuan2006model,meinshausen2009lasso,huang2008adaptive,liu2012incorporating}. Hierarchical structure models have been studied tremendously for honoring peculiar hierarchy structures between the prognostic and predictive effects by proposing various penalty functions \citep{zhao2009composite,bien2013lasso,jenatton2010proximal}. \citet{radchenko2010variable} proposed the VANISH method, which is designed for high-dimensional nonlinear problems by using a penalized least square criterion to enforce the heredity constraint, that is, if an interaction term is included in the model, then the according main effects are included automatically. Although VANISH is proved to equip sparsistency and persistence properties under some regularity conditions, in practice it has little chance to select the interaction terms. 

\cite{lim2015learning} proposed an overlapped group-Lasso approach to select pairwise interactions. Their method augmented and combined the main effects and the corresponding interaction effects as new groups, conditioning such constraints that the strong hierarchy structure between the main and interaction effects are satisfied, and then applied the group-Lasso approach \citep{yuan2006model} for model fitting by using the fast iterative shrinkage-thresholding algorithm (FISTA) \citep{beck2009fast}. The addition of the original and augmented main effects are the ultimate estimated main effects in the model. \citet{lim2015learning} developed a R pacakge \texttt{glinternet}, which is used to compete with our developed package \texttt{smog}. 

In this paper, we propose a novel penalized regression method for identifying predictive genomic biomarkers by using a specific penalty function of the prognostic and predictive effects to enforce the specified hierarchy structure~\eqref{hierstr}, that is, if a gene's predictive effect is nonzero, its corresponding prognostic effect must be nonzero as well. Note that the treatment effect is not included in the penalty function.
\begin{align}\label{hierstr}
    \text{predictive effect} \ne 0 \Longrightarrow \text{prognostic effect} \ne 0
\end{align}

In order to deal with various types of response variables such as continuous, categorical, and survival data, the proposed penalized regression method is adaptively built under the Guaussian, logistic, and Cox proportional hazard models, respectively. Because the prognostic and predictive effects are overlapped in the penalty function, the popular block coordinate descent algorithm \citep{tseng2001convergence} can not be readily applied to minimize the penalized regression model. To address this issue, we propose a novel optimization algorithm to integrate the majorization-minimization (MM) \citep{figueiredo2007majorization} and the alternating direction method of multipliers (ADMM) \citep{tseng1991applications} to solve the minimization problem in the penalized regression method. Our algorithm demonstrates more powerful performance in the simulation study and the real case study. 

This paper is organized as follows. We propose our method and algorithm in Section~\ref{sec:Method}. Section~\ref{sec:tuning} introduces several model selection criteria and proposes a greedy path search for the optimal penalty parameters. We study the asymptotic property for the proposed method under Gaussian assumption in Section~\ref{sec:asymp}. In Section~\ref{sec:Simulation}, we conduct various simulation studies to compare our approach with some existing methods for detecting such genetic biomarkers that satisfies the hierarchy structure \eqref{hierstr}. We explore our method for some real clinical trial data in Section~\ref{sec:Data analysis}, and discuss some extended insights and future work in Section~\ref{sec:Discuss}.

\section{Methodology} \label{sec:Method}

\subsection{Notation} \label{sec:Method-Notation}
For convenience, we summarize the notations that would be used in the following sections. Denote that $f(\tau,\bm{\beta},\bm{\gamma}|\text{data})$ is the loss function, that is composed by the negative log-likelihood function for the data under specified model assumptions. The loss function relates the response variable to the addition of linear terms of the predictors under certain link function, denoted by $\text{link}(\bm{y}) \sim \tau \bm{t} +  \bm{X}\bm{\beta} + \bm{W}\bm{\gamma}$, where $\bm{t}$ denotes the treatment indicators, $\bm{X}=(\bm{x}_{(1)},\ldots,\bm{x}_{(d)})$ denotes the matrix of $d$ predictors (genes), and $\bm{W}=(\bm{x}_{(1)} \times \bm{t},\ldots,\bm{x}_{(d)} \times \bm{t})$ represents the matrix of $d$ predictor-by-treatment interactions. Here note that $\bm{x}_{(j)} \times \bm{t}$ is the element-wise product between the $j$th predictor and the treatment $\bm{t}$. We denote that $\tau$ is the treatment effect, $\bm{\beta}=(\beta_1,\ldots,\beta_d)'$ are $d$ prognostic effects, and $\bm{\gamma}=(\gamma_1,\ldots,\gamma_d)'$ are $d$ predictive effects, respectively. Denote that $\tilde{\bm{X}}=(\bm{t},\bm{X},\bm{W})$ is the model matrix, and $\tilde{\bm{x}}_i$ is the $i$th row in $\tilde{\bm{X}}$. $\bm{\theta}' = (\tau,\bm{\beta}',\bm{\gamma}')$ represents all parameters to be estimated, $\psi_i = \tau t_i + \bm{\beta}'\bm{x}_i + \bm{\gamma}'\bm{w}_i$  represents the addition of linear terms of the predictors for the $i$th observation, and $\phi_i = e^{\psi_i}$ denotes the exponential of the linear effects $\psi_i$, respectively. Without loss of generality, we do not include the intercept in the model, since the intercept would be neglected when fitting the model for centralized data. Thus, the overall dimension of the predictors would be $p=2d+1$.  

\subsection{Penalized regression models} \label{sec:Method-model}
In this article, the objective function to minimize for generalized penalized models is 
\begin{equation}\label{fun1}
    L(\bm{\tau},\bm{\beta},\lambda|\text{data}) = f(\bm{\tau},\bm{\beta},\bm{\gamma}|\text{data}) + \lambda \Omega(\bm{\beta},\bm{\gamma}),
\end{equation}
Where $f(\bm{\tau},\bm{\beta},\bm{\gamma}|\text{data})$ is the aforementioned loss function for the data, and $\lambda \Omega(\bm{\beta},\bm{\gamma})$ is the penalty function
\begin{equation} \label{penalty}
    \bm{\lambda} \Omega(\bm{\beta},\bm{\gamma}) = \sum_{j=1}^{d} \{ \lambda_1 \norm{(\beta_j,\gamma_j)}_2 + \lambda_2 \norm{(\beta_j,\gamma_j)}_2^2 + \lambda_3 |\gamma_j|  \},
\end{equation}
Which is a combination of the composition of the group Lasso penalty, ridge penalty, and Lasso penalty for each group of prognostic ($\beta_j$) and predictive ($\gamma_j$) effects. Lemma~\ref{lemma1} and lemma~\ref{lemma2} prove that the proposed penalty function enforces the specified hierarchy structure \eqref{hierstr}.

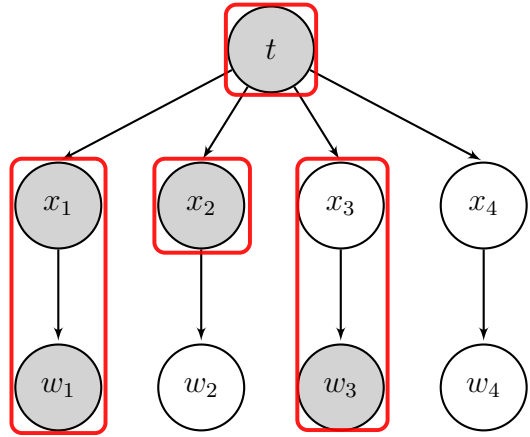
\begin{figure}[hp]
\centering

\begin{tikzpicture}[->,>=stealth',shorten >=1pt,
                    node distance=1.5cm,
                    thick,auto,
                    main node/.style=
                    {circle,draw,
                    inner sep=0.5pt, 
                    minimum size=3.25em,
                    font=\fontsize{12}{12}\selectfont
                    }]

  \node[main node] (1) [fill=mygray]{$t$};
  \node[main node] (2) [below left=1.25cm and 2cm of 1, fill=mygray] {$x_1$};
  \node[main node] (21) [below = 1.25cm of 2,fill=mygray] {$w_1$};
  
  \node[main node] (3) [below left=1.25cm and 0.1cm of 1,fill=mygray] {$x_2$};
   \node[main node] (31) [below = 1.25cm of 3] {$w_2$};
  
     
  \node[main node] (4) [below right=1.25cm and 0.1cm of 1] {$x_{3}$};
   \node[main node] (41) [below = 1.25cm of 4,fill=mygray] {$w_{3}$};
  
  \node[main node] (5) [below right=1.25cm and 2cm of 1] {$x_4$};
   \node[main node] (51) [below = 1.25cm of 5] {$w_4$};
  
  \path [line] (1) -- (2.north);
  \path [line] (2) -- (21);
  
  \path [line] (1) -- (3.north);
  \path [line] (3) -- (31);
  
    
  \path [line] (1) -- (4.north);
  \path [line] (4) -- (41);
  
  \path [line] (1) -- (5.north);
  \path [line] (5) -- (51);
\end{tikzpicture}

\begin{tikzpicture}[remember picture, overlay]
\draw [red,opacity=0.9, line width=1.5pt, rounded corners] (-0.6,4.9) rectangle (0.6,6.1);

\draw [red,opacity=0.9, line width=1.5pt, rounded corners] (-3.45,0.4) rectangle (-2.2,4.05);

\draw [red,opacity=0.9, line width=1.5pt, rounded corners] (-1.55,2.8) rectangle (-0.28,4.05);

\draw [red,opacity=0.9, line width=1.5pt, rounded corners] (0.35,0.44) rectangle (1.55,4.05);

\end{tikzpicture}

\caption{Hierarchical structure for genetic biomarkers. A large predictive effect $\gamma_1$ would protect its corresponding prognostic effect $\beta_1$ in the model.}
\label{fig: hier1}
\end{figure}

\cite{zhao2009composite} proposed the Composite Absolute Penalties (CAP) families for selecting groups of predictors by satisfying the provided known hierarchy structures at the across-group and within-group levels, and suggested using the BLASSO algorithm to search the CAP regularization path. Hence, our proposed penalty function can be seen as a specific case of the CAP family. 


Nonetheless, the design of the penalty function~\eqref{penalty} serves exclusively for detecting predictive biomarkers in drug development, which is compromised among the model complexity, interpretability, and predictability, respectively. Notice that the penalty function $\Omega(\bm{\beta},\bm{\gamma})$ merely includes the prognostic and predictive effects of covariates, precluding the penalization of the treatment effect $\tau$, because the treatment effect should not be penalized in medical research. We do not include higher order interactions when building the model for controlling the model complexity. The overlapping of the predictive effect $\gamma_j$ between the group Lasso penalty $\norm{(\beta_j,\gamma_j)}_2$ and the Lasso penalty $|\gamma_j|$ works for honoring the hierarchy structure~\eqref{hierstr}, which is demonstrated in Lemma~\ref{lemma1} and \ref{lemma2}. The ridge penalty aims to improve the consistency of the estimation for the groups of the prognostic and predictive effects $\beta_j$ and $\gamma_j$, when the multicollinearity occurs \citep{zou2005regularization}. We recommend keep the ridge penalty parameter $\lambda_2$ as a fixed tiny value, such as $\varepsilon = 10^{-6}$ \citep{bien2013lasso}, and merely tune the group Lasso penalty parameter $\lambda_1$ and the Lasso penalty parameter $\lambda_3$. We give more details in the greedy pathway search for $\lambda_1$ and $\lambda_3$ in Section~\ref{sec:tuning}.

Figure~\ref{fig: hier1} illustrates the ideal case in which the biomarkers would be selected by the penalized regression method. When the predictive effect $\gamma_2$ (the coefficient for $x_2 \times t$) is small, a moderately large $\lambda_3$ would shrink $\gamma_2$ to be exactly zero, while its corresponding prognostic effect $\beta_2$ (the coefficient for $x_2$) would be kept in the model, once $\lambda_1$ is not too large. On the other hand, if a predictive effect $\gamma_1$ would not be shrunken to be exactly zero, the group Lasso penalty would keep both of $\gamma_1$ and its corresponding prognostic effect $\beta_1$ in the model. In 



We introduce the penalty function used in this article to enforce the hierarchy structure~\eqref{hierstr}, and next we briefly discuss the loss function for various types of response variable such as continuous, categorical, and time-to-event data. In this article, we assume that the continuous response variable underlies the Guassian assumption, the categorical response variable underlies the multinomial logistic regression model, and the time-to-event response variable underlies the Cox proportional hazard model, respectively. 


\begin{enumerate}[label={(\arabic*)}]
    \sloppy
    \item \textbf{Continuous response variable.} Assume $y_i \overset{i.i.d}{\sim} N(\psi_i,\sigma^2)$. Thus, $f(\tau,\bm{\beta},\bm{\gamma}|\mathrm{data})$ in~\eqref{fun1} is the common squared-error loss function $\frac{1}{2}\sum_{i=1}^{n} \left( y_i - \psi_i \right)^2 $, that is, 
\begin{equation} \label{model}
f(\bm{\tau},\bm{B},\bm{\Gamma}|\text{data}) = \frac{1}{2}\norm{\bm{y} - (\tau\bm{t} + \sum_{j=1}^{d} ( \beta_j \bm{x}_{(j)} + \gamma_j \bm{x}_{(j)} \times \bm{t}) )}^2 
\end{equation}

    \item \textbf{Categorical response variable.}
    Suppose that there are $K$ categories for the response variables. Use the outcome $K$ as the pivot outcome, then the multinomial regression model under some regularized assumptions gives us that $\mathrm{logit}(P(Y_i = k)) = \tau_k t_i + \bm{\beta}_k' \bm{x}_i + \bm{\gamma}_k' \bm{w}_i$, $i=1,\ldots,n; k = 1,\ldots,K-1$. Hence, the loss function underlying the multinomial logistic regression model is the negative log-likelihood function 
\begin{equation}
    f(\bm{\tau},\bm{B},\bm{\Gamma}|\text{data}) = 
    \sum_{i=1}^{n}\mathrm{log}\left( 1 + \sum_{k=1}^{K-1} \phi_{i,k} \right) 
    - \sum_{k=1}^{K-1} \sum_{i \in \mathcal{G}_k} \psi_{i,k}
\end{equation}
Where $\psi_{i,k} = \tau_k t_{i} + \bm{\beta}_k'\bm{x}_i + \bm{\gamma}_k'\bm{w}_i$, $\phi_{i,k} = e^{\psi_{i,k}}$, and $\bm{\theta}_k' = (\tau_k, \bm{\beta}_k', \bm{\gamma}_k')$ is the set of regression coefficients associated with outcome $k$. Note that $\bm{\tau} = (\tau_1,\ldots,\tau_{K-1})$, $\bm{B}=(\bm{\beta}_1,\ldots,\bm{\beta}_{K-1})$, and $\bm{\Gamma}=(\bm{\gamma}_1,\ldots,\bm{\gamma}_{K-1})$ represent the combined treatment effects, prognostic effects, and predictive effects, respectively. 

    \item \textbf{Time-to-event response variable.}
    Let $T_i$ and $C_i$ be the failure time and censoring time for subject $i$, $i=1,\ldots,n$. Define that $Y_i = \min\{T_i,C_i\}$, and $\delta_i=I(T_i \le C_i)$ is the censoring indicator. $(Y_i,\delta_i,\tilde{\bm{x}}_i),i=1,\ldots,n$ is the observed data, and we assume that $T_i$ follows the Cox proportional hazard model. Note that the hazard function is $h(Y_i|\tilde{\bm{x}}_i) = h_0(Y_i)\phi_i$. For simplicity, we use the negative partial log-likelihood function as the loss function to estimate the effect of the predictors, for avoiding modeling the baseline hazard function $h_0$. The negative partial log-likelihood function is 
\begin{equation}
\small
    f(\tau,\bm{\beta},\bm{\gamma}|\text{data}) = 
    \sum_{\delta_i=1} \left( \sum_{l=0}^{m_i-1} \mathrm{log} \left( \sum_{j:Y_j \ge t_i} \phi_j 
    - \frac{l}{m_i} \sum_{j \in \mathrm{H}_i} \phi_j \right) - 
    \sum_{j \in \mathrm{H}_i} \psi_j \right) 
\end{equation}
Where $t_i$ denotes the unique noncensored times, $\mathrm{H}_i = \{Y_j: Y_j = t_i, \delta_j=1 \}$ denotes the set of tied noncensored times, and $m_i$ is the size of $\mathrm{H}_i$, respectively.
\end{enumerate}


So far, we present the penalized regression models for three different types of response variables. Then we introduce a novel optimization algorithm to minimize the penalized regression model, and explain why the penalty function $\Omega(\bm{\beta},\bm{\gamma})$ enforces the hierarchy structure~\eqref{hierstr}. 


\subsection{Integrative Optimization Algorithm} \label{sec:Method-algorithm}

Notice that $\beta_j$ and $\gamma_j$ are a group of related prognostic and predictive effects for the predictor $x_j$, and $\gamma_j$ is overlapped in the group Lasso penalty $\norm{(\beta_j,\gamma_j)}_2$ and the Lasso penalty $|\gamma_j|$, therefore the penalty function $\Omega(\bm{\theta})$ is block separable between groups of prognostic and predictive effects, rather than component separable within each group of them. Hence, the minimization problem of the penalized regression model can not be readily solved by using the block coordinate descent algorithm. 

\sloppy
To address this issue, we propose to a novel optimization algorithm by integrating the Majorization-Minimization (MM) and the alternative direction method of multipliers (ADMM) to minimize the penalized regression model $L(\bm{\theta}) = f(\bm{\theta}) + \bm{\lambda} \Omega(\bm{\theta})$, where $f(\bm{\theta}|\text{data})$ is the loss function, and $\bm{\lambda} \Omega(\bm{\theta}) = \sum_{j=1}^{d} \big( \lambda_1 \norm{(\beta_j,\gamma_j)}_2 + \lambda_2 \norm{(\beta_j,\gamma_j)}_2^2 + \lambda_3 |\gamma_j|  \big) $ is the penalty function, which will be proved to enforce the hierarchy structure~\eqref{hierstr}, that is, $\gamma_j \ne 0 \Longrightarrow \beta_j \ne 0$. In ADMM form, the minimization problem for $L(\bm{\theta})$ is equivalent to 
\begin{equation} \label{Lagrangian}
    \begin{split}
         \argmin L(\bm{\theta}) & = f(\bm{\theta}) + \bm{\lambda} \Omega(\bm{z}) \\
    \text{subject to } & \bm{\theta} = \bm{z}
    \end{split}
\end{equation}
Which can be written as the augmented Lagrangian function
\begin{equation} \label{augLag}
\tilde{L}(\bm{\theta},\bm{z},\bm{u},\rho) =  f(\bm{\theta}) + \bm{\lambda} \Omega(\bm{z}) + \bm{u}'(\bm{\theta} - \bm{z}) + \frac{\rho}{2}\norm{\bm{\theta} - \bm{z}}^2,    
\end{equation}
where $\bm{z}$ is the augmented parameter that is equivalent to $\bm{\theta}$, $\bm{u}$ is the dual variable, and $\rho>0$ is the penalty for the deviation of $\bm{\theta}$ and $\bm{z}$. Given the warm start for the parameters $\bm{\theta}^k$, $\bm{z}^k$, $\bm{u}^k$, the minimization of the augmented Lagrangian function~\eqref{augLag} boils down to alternating the minimization with respect to $\bm{\theta}$, $\bm{z}$, and the dual variable $\bm{u}$, respectively, while holding other variables constant, until the stopping criterion is satisfied such that the primal residual and the dual residual converge to zero as the algorithm proceeds. 

Without loss of generality, assume that $f(\bm{\theta})$ is a continuously differentiable convex function with Lipschitz contant $L$ of $\nabla f(\bm{\theta})$. $f(\bm{\theta})$ can be approximated by its majorization function using the second-order Taylor expansion at $\bm{\theta}^k$, that is, $
f(\bm{\theta}) \le Q(\bm{\theta}|\bm{\theta}^k) = f(\bm{\theta}^k) + \nabla f(\bm{\theta}^k)'(\bm{\theta} - \bm{\theta}^k) + \frac{L}{2} \norm{\bm{\theta} - \bm{\theta}^k}^2$. Then, the $(k+1)$th $\bm{\theta}$-minimization turns out  
$\bm{\theta}^{k+1} = \argmin Q(\bm{\theta}|\bm{\theta}^k) + \frac{\rho}{2} \norm{\bm{\theta} - (\bm{z}^k - \bm{u}^k)}^2$. Hence, $\bm{\theta}^{k+1} = \frac{1}{L+\rho} \left( L\bm{\theta}^k - \nabla f(\bm{\theta}^k) + \rho (\bm{z}^k - \bm{u}^k)\right)$. Note that when $f(\bm{\theta})$ is nonlinear, it is not trivial to calculate the smallest Lipschitz constant of $\nabla f(\bm{\theta})$ ($L$). We use the backtrack line search approach to obtain the smallest Lipschitz constant for $L$. To be brief, one can initialize some small value for $L$ and some positive value $\alpha>1$, repeat updating $L$ by scaling it by $\alpha$ until $f(\bm{\theta}^{k+1}) > Q(\bm{\theta}^{k+1} | \bm{\theta}^k)$. This novel optimization algorithm is named for \texttt{smog}, which is designed to minimize structural models using overlapped group penalty. The overarching steps are summarized in Algorithm~\ref{algorithm:MM-ADMM0}. 

\begin{SMOGAlgorithm}[H]
    \SetAlgoLined
    \caption{MM-ADMM for penalized regression models}
    \label{algorithm:MM-ADMM0}
    \KwIn{function $f,\Omega$, and parameters $\bm{\lambda},\rho$}
    Initialization for $\bm{\theta}^0$, $\bm{z}^0$, $\bm{u}^0$, $\alpha > 1$\;
    \While{$\varepsilon^{\mathrm{pri}} > e^{\mathrm{pri}}$ or $\varepsilon^{\mathrm{dual}} > e^{\mathrm{dual}}$}{
    Initialization for some $L>0$\;
    \Repeat{$f(\bm{\theta}) \le Q(\bm{\theta}|\bm{\theta}^k)$}{
    $Q(\bm{\theta}|\bm{\theta}^k) = f(\bm{\theta}^k) + \nabla f(\bm{\theta}^k)'(\bm{\theta} - \bm{\theta}^k) + \frac{L}{2} \norm{\bm{\theta} - \bm{\theta}^k}^2$ \;
    Update: $L = \alpha L$ \quad (Backtrack line search for $L$) \;
    }
    $\bm{\theta}^{k+1} = \argmin Q(\bm{\theta}|\bm{\theta}^k) + \frac{\rho}{2} \norm{\bm{\theta} - (\bm{z}^k - \bm{u}^k)}^2$\;
    $\bm{z}^{k+1} = \argmin \frac{\rho}{2} \norm{\bm{z} - (\bm{\theta}^{k+1} + \bm{u}^k)}^2 + \bm{\lambda} \Omega(\bm{z})$\;
    $\bm{u}^{k+1} = \bm{u}^{k} + \bm{\theta}^{k+1} - \bm{z}^{k+1}$\;
    }
\end{SMOGAlgorithm}

Notice that the update of $\bm{\theta}^{k+1}$ is trivial, since the corresponding objective function $Q(\bm{\theta}|\bm{\theta}^{k})+\frac{L}{2} \norm{\bm{\theta} - \bm{\theta}^k}^2$ is in quadratic form. The crucial is to update $\bm{z}^{k+1}$, therefore we present the Lemma~\ref{lemma1} and \ref{lemma2} to show that the minimizer $\bm{z}^{k+1}$ is fundamentally the proximal operator of the penalty function $\Omega$ with parameter $\lambda/\rho$ on $\bm{\theta}^{k+1}+\bm{u}^k$, that is, $\bm{z}^{k+1}=\mathrm{prox}_{\rho\Omega/\bm{\lambda}}(\bm{\theta}^{k+1}+\bm{u}^k)$, where $\bm{u}^k$ is essentially the running sum of the residuals between the original primal variable $\bm{\theta}$ and the augmented variable $\bm{z}$. 
\begin{lemma} \label{lemma1}
Let $\bm{\theta}=(\theta_1,\theta_2)$, $\bm{b}=(b_1,b_2)$. The solution to minimize  $l(\bm{\theta}) = \frac{1}{2}\norm{\bm{\theta} - \bm{b}}^2 + \lambda_1 \norm{\bm{\theta}}_2 + \lambda_2 \norm{\bm{\theta}}^2 + \lambda_3 |\theta_2|$ with respect to $\bm{\theta}$ is in such form that 
\begin{equation}\label{lemma1:solution}
    \begin{cases}
        \theta_1 = \frac{1}{1+2\lambda_2} (1 - \frac{\lambda_1}{\sqrt{|b_1|^2 + |s(b_2,\lambda_3)|^2}} )_{+} b_1 \\
        \theta_2 = \frac{1}{1+2\lambda_2} (1 - \frac{\lambda_1}{\sqrt{|b_1|^2 + |s(b_2,\lambda_3)|^2}} )_{+} s(b_2,\lambda_3)\\
    \end{cases}
\end{equation}
In general, $s(b_2,\lambda_3)=\text{sign}(b_2)(|b_2|-\lambda_3)_{+}$ is the Lasso operator on $b_2$ for the soft-thresholding parameter $\lambda_3$, where $t_{+}=\max(t,0)$.  
\end{lemma}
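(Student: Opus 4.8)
The plan is to use convexity directly. The function $l$ is the sum of the $1$-strongly-convex quadratic $\frac{1}{2}\norm{\bm{\theta}-\bm{b}}^2$ (sharpened further by the ridge term) and three convex penalties, so it has a unique minimizer $\bm{\theta}^\star$, characterized by the subgradient optimality condition $\bm{0}\in\partial l(\bm{\theta}^\star)$. Writing this out,
\[
\bm{0}\in(1+2\lambda_2)\bm{\theta}-\bm{b}+\lambda_1\,\partial\norm{\bm{\theta}}_2+\lambda_3\,(\partial|\theta_2|)\,\bm{e}_2,\qquad \bm{e}_2=(0,1)',
\]
where $\partial\norm{\bm{\theta}}_2=\{\bm{\theta}/\norm{\bm{\theta}}_2\}$ for $\bm{\theta}\ne\bm{0}$ and equals the closed unit ball at $\bm{0}$, and $\partial|\theta_2|=\{\mathrm{sign}(\theta_2)\}$ for $\theta_2\ne0$ and $[-1,1]$ at $0$. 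Since the minimizer is unique, it suffices to verify that the right-hand side of~\eqref{lemma1:solution} satisfies this inclusion.

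Concretely I would split into cases, writing $r:=\sqrt{|b_1|^2+|s(b_2,\lambda_3)|^2}$. First, when $r\le\lambda_1$, formula~\eqref{lemma1:solution} returns $\bm{\theta}=\bm{0}$, and the optimality condition becomes: there exist $\bm{v}$ with $\norm{\bm{v}}_2\le1$ and $w\in[-1,1]$ with $\bm{b}=\lambda_1\bm{v}+\lambda_3 w\,\bm{e}_2$; eliminating $w$ by soft-thresholding the second coordinate, this is feasible precisely when $r\le\lambda_1$. Second, when $r>\lambda_1$, one shows $\bm{\theta}^\star\ne\bm{0}$ (it cannot be the zero solution just characterized), so the two coordinatewise stationarity equations read $(1+2\lambda_2+\lambda_1/\norm{\bm{\theta}}_2)\theta_1=b_1$ and $(1+2\lambda_2+\lambda_1/\norm{\bm{\theta}}_2)\theta_2=b_2-\lambda_3\,\partial|\theta_2|$. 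Imposing sign-consistency in the second equation forces the $|\theta_2|$-subgradient to equal $\mathrm{sign}(b_2)$ when $|b_2|>\lambda_3$ and $b_2/\lambda_3$ when $|b_2|\le\lambda_3$, so in all cases the right-hand side collapses to $s(b_2,\lambda_3)$. Setting $c:=1+2\lambda_2+\lambda_1/\norm{\bm{\theta}}_2>0$ then gives $\theta_1=b_1/c$ and $\theta_2=s(b_2,\lambda_3)/c$, hence $\norm{\bm{\theta}}_2=r/c$; substituting back produces the scalar equation $c(1-\lambda_1/r)=1+2\lambda_2$, whose solution reproduces the common factor $\tfrac{1}{1+2\lambda_2}(1-\lambda_1/r)$ in~\eqref{lemma1:solution}, with $(1-\lambda_1/r)_+=1-\lambda_1/r>0$ here.

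A slicker alternative is to read the minimizer off as a composition of proximal operators: completing the square absorbs the ridge term, showing the minimizer equals $\tfrac{1}{1+2\lambda_2}\,\mathrm{prox}_{\lambda_1\norm{\cdot}_2}\!\big(\mathrm{prox}_{\lambda_3|\cdot|\,\bm{e}_2}(\bm{b})\big)$ after using the positive homogeneity of soft-thresholding and of the group-$\ell_2$ shrinkage under rescaling; this is exactly~\eqref{lemma1:solution}. I expect the main obstacle, in either route, to be the bookkeeping at the boundary $\theta_2=0$ with $\bm{\theta}\ne\bm{0}$, where $\partial|\theta_2|$ is set-valued: one must check that the value forced by the first-coordinate equation is consistent and corresponds precisely to the regime $|b_2|\le\lambda_3$ (i.e. $s(b_2,\lambda_3)=0$), so that no separate sub-case survives and the single formula~\eqref{lemma1:solution} is valid throughout. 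For the composition route, the corresponding point to justify is the (weighted) sparse-group-lasso proximal decomposition identity when the $\ell_1$ term acts only on the second coordinate rather than on the whole group.
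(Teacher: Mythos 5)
Your proposal is correct and follows essentially the same route as the paper's proof: write the first-order stationarity conditions, deduce $\norm{\bm{\theta}}_2 = \frac{1}{1+2\lambda_2}\bigl(\sqrt{|b_1|^2+|s(b_2,\lambda_3)|^2}-\lambda_1\bigr)_{+}$, and substitute back. The only difference is that you carry out the subgradient bookkeeping at $\bm{\theta}=\bm{0}$ and at $\theta_2=0$ explicitly (and verify that the zero solution corresponds exactly to the regime $\sqrt{|b_1|^2+|s(b_2,\lambda_3)|^2}\le\lambda_1$), which the paper's two-line argument leaves implicit, so your version is if anything more complete.
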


\begin{proof}
Take the derivative for $l(\bm{\theta})$ with respect to $\theta_1$ and $\theta_2$ respectively, and when $\bm{\theta} \ne 0$, we have 
\begin{equation}\label{lemma1:derivative}
\begin{cases}
\theta_1(1+2\lambda_2 + \frac{\lambda_1}{\norm{\bm{\theta}}_2}) = b_1\\
\theta_2(1+2\lambda_2 + \frac{\lambda_1}{\norm{\bm{\theta}}_2}) = b_2 - \lambda_3 \text{sign}(\theta_2)\\
\end{cases}
\end{equation}
After taking some simple algebra for~\eqref{lemma1:derivative}, we obtain $\norm{\bm{\theta}} = \frac{1}{1+2\lambda_2} \left (\sqrt{|b_1|^2+|s(b_2,\lambda_3)|^2} - \lambda_1 \right)_{+}$; When $\sqrt{|b_1|^2+|s(b_2,\lambda_3)|^2} \le \lambda_1$, $\bm{\theta} = 0$. Substitute it back in the formula~\eqref{lemma1:derivative}, and we can get the solution~\eqref{lemma1:solution}. 
\end{proof}


\begin{remark}\label{remark1}
In Lemma~\ref{lemma1}, $\theta_1$ and $\theta_2$ represent the prognostic and predictive effects, and the penalty term $\bm{\lambda}\Omega(\bm{\theta}) = \lambda_1 \norm{\bm{\theta}}_2 + \lambda_2 \norm{\bm{\theta}}^2 + \lambda_3 |\theta_2|$ realizes the overlapped penalty on the predictive effect $\theta_2$. The three penalty parameters play their specific roles in this constrained minimization, where $\lambda_1$ controls to select or discard $\theta_1$ and $\theta_2$ together, $\lambda_2$ can alleviate the multicollinearity between $\theta_1$ and $\theta_2$, and $\lambda_3$ controls the selection of $\theta_2$. 

We can tell from the solution~\eqref{lemma1:solution} that the hierarchy structure $\theta_2 \ne 0 \Longrightarrow \theta_1 \ne 0$. Because $(\theta_1,\theta_2)$ is essentially obtained by applying the group Lasso operator on $(b_1,s(b_2,\lambda_3))$, $\theta_1$ and $\theta_2$ are selected or discarded simultaneously. When $s(b_2,\lambda_3) \ne 0$, $\theta_1 \ne 0$ must come up with $\theta_2 \ne 0$ for $\sqrt{|b_1|^2+|s(b_2,\lambda_3)|^2} > \lambda_1$; when $s(b_2,\lambda_3) = 0$, $\theta_2$ must be zero but $\theta_1 = \frac{1}{1+2\lambda_2}s(b_1, \lambda_1)$, which depends on $\lambda_1$.   
\end{remark}

\begin{lemma} \label{lemma2}
Let $\bm{\theta}=(\tau,\bm{\beta},\bm{\gamma})$, $\bm{b}=(b_0,\bm{b}_1,\bm{b}_2)$, where $\tau,b_0 \in \mathbb{R}$, and $\bm{\beta},\bm{\gamma}$ and $\bm{b}_1,\bm{b_2}$ $\in \mathbb{R}^d$. Consider the minimization problem 
\begin{equation}\label{lemma2:min}
  \min_{\bm{\theta} \in \mathbb{R}^{2d+1}} \frac{1}{2}\norm{\bm{\theta} - \bm{b}}^2 + 
  \sum_{j=1}^{d} \big( \lambda_1 \norm{(\beta_j,\gamma_j)}_2 + \lambda_2 \norm{(\beta_j,\gamma_j)}_2^2 + \lambda_3 |\gamma_j|  \big)
\end{equation}
Where $\lambda_1,\lambda_2,\lambda_3$ are both positive. The solution to the regularized minimization problem~\eqref{lemma2:min} is $\bm{\theta}=\mathrm{prox}_{\Omega/\bm{\lambda}}(\bm{b})$ such that 
\begin{equation}\label{lemma2:solution}
    \begin{cases}
    \tau = b_0 \\
    \beta_j = \frac{1}{1+2\lambda_2}(1-\frac{\lambda_1}{\sqrt{|b_1^j|^2+|s(b_2^j,\lambda_3)|^2}})_{+}b_1^j \\
    \gamma_j = \frac{1}{1+2\lambda_2}(1-\frac{\lambda_1}{\sqrt{|b_1^j|^2+|s(b_2^j,\lambda_3)|^2}})_{+}s(b_2^j,\lambda_3)
    \end{cases}
\end{equation}
$j=1,\ldots,d$. Note that $s(x,\lambda)=\text{sign}(x)(|x|-\lambda)_{+}$ is the lasso soft-threshold operator for $\lambda > 0$. 
\end{lemma}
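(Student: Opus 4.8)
The plan is to exploit the \emph{separability} of the objective in~\eqref{lemma2:min} and then reduce each block to Lemma~\ref{lemma1}. First I would observe that the quadratic data-fidelity term splits as
\[
\tfrac12\norm{\bm{\theta}-\bm{b}}^2 = \tfrac12(\tau-b_0)^2 + \sum_{j=1}^d \tfrac12\big( (\beta_j-b_1^j)^2 + (\gamma_j-b_2^j)^2 \big),
\]
while the penalty $\sum_{j=1}^d \big( \lambda_1\norm{(\beta_j,\gamma_j)}_2 + \lambda_2\norm{(\beta_j,\gamma_j)}_2^2 + \lambda_3|\gamma_j| \big)$ is already a sum of terms each involving only one pair $(\beta_j,\gamma_j)$ and none involving $\tau$. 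Hence the whole objective decomposes as a sum of $d+1$ functions with pairwise disjoint sets of variables, so a minimizer is obtained by minimizing each piece separately.

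Second, the $\tau$-piece is just $\tfrac12(\tau-b_0)^2$, minimized at $\tau=b_0$, which yields the first line of~\eqref{lemma2:solution}; this simply reflects that $\tau$ (the treatment effect) is left unpenalized. Third, for each fixed $j$, the corresponding piece is
\[
\tfrac12\big((\beta_j-b_1^j)^2+(\gamma_j-b_2^j)^2\big) + \lambda_1\norm{(\beta_j,\gamma_j)}_2 + \lambda_2\norm{(\beta_j,\gamma_j)}_2^2 + \lambda_3|\gamma_j|,
\]
which is precisely the function $l(\bm{\theta})$ of Lemma~\ref{lemma1} under the identification $\theta_1=\beta_j$, $\theta_2=\gamma_j$, $b_1=b_1^j$, $b_2=b_2^j$. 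Applying Lemma~\ref{lemma1} to each $j$ gives the second and third lines of~\eqref{lemma2:solution}. Concatenating the $d+1$ blockwise minimizers produces a global minimizer of~\eqref{lemma2:min}; strict convexity of each block (from the $\lambda_2$ ridge term, or already from the quadratic data term when $\lambda_2=0$) makes that minimizer unique, so this is \emph{the} solution, and by definition of the proximal operator it equals $\mathrm{prox}_{\Omega/\bm{\lambda}}(\bm{b})$.

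I do not expect a serious obstacle. The only points requiring care are (i) justifying that a minimizer of a separable sum is the concatenation of the blockwise minimizers --- immediate, since the blocks share no variables --- and (ii) ensuring that the derivation of Lemma~\ref{lemma1}, which argues via first-order stationarity on the region $\bm{\theta}\neq 0$ plus an explicit check of the $\bm{\theta}=0$ case, actually pins down the global minimum; this follows from convexity of $l$, so one may simply remark that the stationarity conditions are sufficient. The remaining work is mild bookkeeping: tracking the composition of the soft-threshold operator $s(b_2^j,\lambda_3)$ inside the radical $\sqrt{|b_1^j|^2+|s(b_2^j,\lambda_3)|^2}$, which is inherited verbatim from Lemma~\ref{lemma1}.
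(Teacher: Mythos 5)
Your proposal is correct and follows essentially the same route as the paper's own proof: decompose the objective into the unpenalized $\tau$-block (giving $\tau=b_0$) and the $d$ disjoint $(\beta_j,\gamma_j)$-blocks, then apply Lemma~\ref{lemma1} to each block. Your added remarks on uniqueness via strict convexity are a small refinement the paper leaves implicit, but the argument is the same.
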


\begin{proof}
Because the objective function~\eqref{lemma2:min} are separate for $\tau$, $(\beta_j,\gamma_j)$, $j=1,\ldots,d$, thus it can be solved by the block coordinates independently. Note that the penalty term in the minimization problem~\eqref{lemma2:min} does not contain $\tau$, therefore $\tau = b_0$. By applying the Lemma~\ref{lemma1}, we can directly obtain the solution~\eqref{lemma2:solution}. 
\end{proof}

\begin{remark}\label{remark3}
Lemma~\ref{lemma2} is used for $\bm{z}$-update in \texttt{smog}, which realizes the soft-thresholding by satisfying the hierarchical structure within the groups of $(\beta_j,\gamma_j)$. In fact, we merely apply the Majorization-Minimization algorithm to update $\bm{\theta}$ one time, but we do not intend to reach the global minimum value for a temporary objective function $f(\bm{\theta}) + \frac{\rho}{2} \norm{\bm{\theta} - (\bm{z}^k - \bm{u}^k)}^2$ for updating $\bm{\theta}$, since such temporary global minimum value might be trapped in some local minimum of ~\eqref{augLag}.      
Given the provided absolute and relative tolerance values $\varepsilon^{\mathrm{abs}}$ and $\varepsilon^{\mathrm{rel}}$ (usually set as $10^{-4}$ or $10^{-5}$), the stopping criterion in \texttt{smog} requires that the primal residual $\varepsilon^{\mathrm{pri}}=\frac{\norm{\bm{\theta}^{k+1} - \bm{z}^{k+1}}_2}{\sqrt{2p}}$, and the dual residual $\varepsilon^{\mathrm{dual}}=\frac{\norm{\bm{z}^{k+1} - \bm{z}^{k}}_2}{\sqrt{2p}}$ satisfy that 
\begin{align} \label{stopcriterion}
 \varepsilon^{\mathrm{pri}} & \le e^{\mathrm{pri}} = \varepsilon^{\text{abs}} + \varepsilon^{\mathrm{rel}} \max\{\frac{\norm{\bm{\theta}}_2^{k+1}}{\sqrt{2p}},\frac{\norm{\bm{z}}_2^{k+1}}{\sqrt{2p}}\} \\ 
 \varepsilon^{\mathrm{dual}} & \le e^{\mathrm{dual}} = \sqrt{\frac{n}{p}}\rho^{-1}\varepsilon^{\mathrm{abs}} + \varepsilon^{\mathrm{rel}} \frac{\norm{\bm{u}}_2^{k+1}}{\sqrt{2p}}
\end{align}
\end{remark}



\subsection{Tuning the penalty parameter $\mathbf{\lambda}$}\label{sec:tuning}

We considered several approaches to determine the optimal penalty parameters $\bm{\lambda}$. The mean square errors of the response from $K$-fold cross-validations is a popular approach to select the best predictive model. One can randomly divide the data into $K$ folds, where the $K-1$ folds of data are used to fit a model, which is in turn to validate the left one fold data by using the mean square errors of the response. This approach can be described by using 
\[
\mathrm{MSE}(\bm{\lambda})=\frac{1}{K}\sum_{k=1}^{K}\norm{\bm{y}_{k} - \hat{\bm{y}}_{(-k)}^{\bm{\lambda}}}^2,
\]
where $\hat{\bm{y}}_{(-k)}^{\bm{\lambda}}$ is the fitted response for the $k$th fold data without using themselves. A similar approach is the generalized cross-validation score, which is defined as 
\[
\mathrm{GCV}(\bm{\lambda}) = \frac{\norm{\bm{y} - \hat{\bm{y}}_{\bm{\lambda}}}^2}{n(1-d_{\bm{\lambda}}/n)^2}.
\]
Note that $d_{\bm{\lambda}}$ is the degrees of freedom that is approximated by $\mathrm{tr}\left\{\tilde{X}_{\bm{\lambda}}(\tilde{X}_{\bm{\lambda}}^T\tilde{X}_{\bm{\lambda}}+W_{\bm{\lambda}})^{-1}\tilde{X}_{\bm{\lambda}^T}\right\}$, where $\tilde{X}_{\bm{\lambda}}$ is the submatrix with columns corresponding to the nonzero coefficients selected by $\bm{\lambda}$, and $W_{\bm{\lambda}}$ is the diagonal matrix with the elements that are also determined by the nonzero coefficients. Because $\tau$ is not included in the penalty function, $W_{\bm{\lambda}}(\hat{\tau})=0$. And $W_{\bm{\lambda}}(\hat{\beta}_j)= \frac{\lambda_1}{\norm{(\hat{\beta}_j,\hat{\gamma}_j)}_2} + 2\lambda_2$ for $|\hat{\beta}_j| \ne 0$, and $W_{\bm{\lambda}}(\hat{\gamma}_j)= \frac{\lambda_1}{\norm{(\hat{\beta}_j,\hat{\gamma}_j)}_2} + 2\lambda_2 + \frac{\lambda_3}{|\hat{\gamma}_j|}$ for $|\hat{\gamma}_j| \ne 0$, respectively. We also considered the AIC and BIC type criterion to search the optional $\bm{\lambda}$, such that 
$$\mathrm{AIC}(\bm{\lambda}) = \mathrm{log}(\norm{\bm{y}-\hat{\bm{y}}_{\bm{\lambda}}}^2/n)+2d_{\bm{\lambda}}/n,$$
and 
$$\mathrm{BIC}(\bm{\lambda}) = \mathrm{log}(\norm{\bm{y}-\hat{\bm{y}}_{\bm{\lambda}}}^2/n)+\mathrm{log}(n)d_{\bm{\lambda}}/n.$$
In practice, the sample size $n$ is usually small in genetic biomarker study, an correction AIC criterion is borrowed from \cite{zhao2009composite} that is $$\mathrm{cAIC}(\bm{\lambda}) = \frac{n}{2}\mathrm{log}(\norm{\bm{y}-\hat{\bm{y}}_{\bm{\lambda}}}^2) + 
\frac{n}{2}(1+d_{\bm{\lambda}}/n)/(1-d_{\bm{\lambda}}+2/n),$$
in which $d_{\bm{\lambda}}$ is used as the number of nonzero predictors. 

For simplicity notation, we define $c(\bm{\lambda})$ by the model selection criterion including the $\mathrm{MSE}(\bm{\lambda})$, $\mathrm{GCV}(\bm{\lambda})$, $\mathrm{AIC}(\bm{\lambda})$, $\mathrm{BIC}(\bm{\lambda})$, and $\mathrm{cAIC}(\bm{\lambda})$, respectively. In Section~\ref{sec:Simulation}, our simulation study demonstrates that the generalized cross-validation (GCV) and AIC criteria are likely to select more covariates, which results in both higher sensitivity and false discovery rate in terms of the true covariates in the model, and cAIC and BIC are inclined to select smaller number of covariates, causing the total opposite situation as GCV and AIC. MSE seems more balanced than other four model assessments, though it is more computationally expensive. Overall, it is tricky to claim one criterion overwhelmingly dominates other criteria, therefore we recommend one could consider use both $\mathrm{GCV}$ and $\mathrm{MSE}$ criteria under specific conditions.  

In practice, it is computationally unfeasible to search a fine grid of points for three tuning parameters $\lambda_1,\lambda_2,\lambda_3$ in the elastic-net type penalty term in~\eqref{penalty}. In Section~\ref{sec:Method-model}, we mentioned that $\lambda_2$ is recommend to be kept as a fixed tiny value, such as $10^{-6}$, because our simulation study finds that it is preferred by very small values. In Section~\ref{sec:Simulation}, for simplicity, we conduct the simulation study by keeping the ridge penalty parameter $\lambda_2$ as zero, and the greedy pathway search merely focuses on tuning $\lambda_1$ and $\lambda_3$. Such modification illustrates easily the power of selecting the true predictive genetic biomarkers. We make it optional to set $\lambda_2$ as any constant tiny value in the R package \texttt{smog}.

\cite{jacob2009group} proposed to combine bracketing and golden section search for univariate $\lambda$, however, it is still computationally expensive to do such searching for two $\lambda$'s. We propose a greedy pathway search for $\lambda_1$ and $\lambda_3$ by entitling the superiority search in $\lambda_1$ than $\lambda_3$, since $\lambda_1$ controls the selection for groups of prognostic and predictive effects. Because our proposed algorithm could yield very sparse estimates for large $\lambda$'s fast, it is wisdom to downward search the optimal $\lambda_1$ and $\lambda_3$ from a large value $\lambda_0$. Hence, start with a large enough tuning parameter, $\lambda_0=c/\lambda_{min}(\tilde{\bm{X}})$, in which $c>1$ ensures that $\lambda_0$ allows no more than one predictor entering the model. At the $k$th search step, define $\lambda_{1,k} = \lambda_{3,k} = \lambda_0\delta^{k}$, and calculate $c_1=c(\lambda_0\delta^{k+1},\lambda_0\delta_{k})$, $c_2=c(\lambda_0\delta^{k},\lambda_0\delta^{k+1})$, and $c_3=c(\lambda_0\delta^{k+1},\lambda_0\delta^{k+1})$, where $\delta$ is a fixed arbitrary damping ratio $\in (0,1)$. At the $(k+1)$th step, $\lambda_{1,k+1}=\lambda_0\delta^{k+I(c_1 \wedge c_3 \le c_2)}$ and $\lambda_{3,k+1}=\lambda_0\delta^{k+I(c_2 \wedge c_3 < c_1)}$. Continue this pathway search until $c(\lambda_{1,k+1},\lambda_{3,k+1}) > c(\lambda_{1,k},\lambda_{3,k})$ first occurs.

    
    
    

This greedy pathway search could yield a local minimum for a model selection criterion at a certain pair of $\lambda_1$ and $\lambda_3$, however, it might slip over the local minimum such that there does not exist a ``V" type overturning point for $\lambda_1$ and $\lambda_3$. Hence, we recommend set up the maximum search step as 20, and determine the ultimate $\lambda_1$ and $\lambda_3$ with the minimum model selection criterion value. On the other hand, the pathway search could be substantively affected by the damping ratio $\delta$, therefore we recommend one could try several damping ratios $\delta$'s and choose the pair of $\lambda_1$ and $\lambda_3$ with the minimum $c(\lambda_1,\lambda_3)$. We use $\delta=0.9$ in the simulation section.  

\section{Asymptotic properties}\label{sec:asymp}
In this section, we study the asymptotic property for the penalized regression model under the Gaussian assumption by using the analogous approach as \cite{knight2000asymptotics}. Under the Guassian assumption, the penalized regression function is 
\begin{equation}
\small
    \frac{1}{2}\norm{\bm{y} - \big(\tau\bm{t} + \sum_{j=1}^{d} ( \beta_j \bm{x}_{(j)} + \gamma_j \bm{x}_{(j)} \times \bm{t} ) \big)}^2  + \sum_{j=1}^{d} \big( \lambda_1 \norm{(\beta_j,\gamma_j)}_2 + \lambda_2 \norm{(\beta_j,\gamma_j)}_2^2 + \lambda_3 |\gamma_j|  \big)
\end{equation}
Before heading the Theorem~\ref{them1}, we denote some additional notations for convenience. Denote by $\breve{\bm{X}}$ the augmented design matrix for absorbing the ridge penalty term into the objective function, such that 
\[
\breve{\bm{X}} = 
\begin{bmatrix}
\bm{t} & \bm{X} & \bm{W} \\
\bm{0} & \sqrt{\lambda_2}\bm{I} & \bm{0} \\
\bm{0} & \bm{0} & \sqrt{\lambda_2}\bm{I} \\
\end{bmatrix} 
\]
Denote that $\bm{u}=(u_1,\bm{u}_2^T,\bm{u}_3^T)^T$ where $u_1$ is a vector of $1 \times 1$, and $\bm{u}_2$ and $\bm{u}_3$ are vectors of $d \times 1$. 

\begin{theorem}\label{them1}
If $\lambda_{l,n}/\surd{n} \to \lambda_{l} \ge 0 (l=1,2,3)$, and $\bm{C} = \lim_{n \to \infty} \left( \frac{1}{n} \breve{\bm{X}}^T \breve{\bm{X}}\right)$ is nonsingular, then 
\[
\surd{n}(\hat{\bm{\theta}}_n - \bm{\theta}) \underset{d}{\to} \argmin(V),
\]
where 
\begin{align*}
    V(\bm{u}) = &-\bm{u}^T\bm{B} + \bm{u^TCu} + 
    \lambda_1 \sum_{j=1}^{d} \{ \norm{(u_{2j},u_{3j})}_2 I(\norm{(\beta_j,\gamma_j)}_2 = 0) \\
    & + \frac{1}{\norm{(\beta_j,\gamma_j)}}_2 (u_{2j}|\beta_j| + u_{3j}|\gamma_j|)I(\norm{(\beta_j,\gamma_j)}_2 \ne 0) \}\\
    & + \lambda_3 \sum_{j=1}^{d} \{|u_{3j}|I(\gamma_j = 0)+u_{3j}sign(\gamma_j)I(\gamma_j \ne 0)\}
\end{align*}
and $\bm{B}$ has an $\mathcal{N}(\bm{0},\sigma^2C)$
\end{theorem}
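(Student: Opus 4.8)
\medskip
\noindent\textbf{Proof plan (sketch).} The plan is to adapt the convex epi-convergence technique of \citet{knight2000asymptotics}. First I would absorb the ridge term into the design, writing the objective as
\[
L_n(\bm\theta)=\tfrac12\norm{\breve{\bm y}-\breve{\bm X}\bm\theta}^2+\lambda_{1,n}\sum_{j=1}^{d}\norm{(\beta_j,\gamma_j)}_2+\lambda_{3,n}\sum_{j=1}^{d}|\gamma_j|,\qquad \breve{\bm y}=(\bm y^T,\bm 0^T,\bm 0^T)^T,
\]
recentre the generic parameter as $\bm\theta+\bm u/\surd n$, and define the localized criterion
\[
V_n(\bm u)=L_n\!\big(\bm\theta+\bm u/\surd n\big)-L_n(\bm\theta),
\]
which is convex in $\bm u\in\mathbb R^{2d+1}$ and is minimized at $\hat{\bm u}_n:=\surd n(\hat{\bm\theta}_n-\bm\theta)$. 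It then suffices to establish: (i) $V_n(\bm u)\xrightarrow{d}V(\bm u)$ for each fixed $\bm u$, and (ii) $V$ has an almost surely unique minimizer. A standard convexity argument for minimizers of convex stochastic processes (as in \citet{knight2000asymptotics}) then promotes (i) to uniform convergence on compacta, gives tightness of $\hat{\bm u}_n$, and yields $\hat{\bm u}_n\xrightarrow{d}\argmin V$, which is the assertion.

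For (i) I would expand $V_n$ term by term. Writing $\bm y=\tilde{\bm X}\bm\theta+\bm\varepsilon$ with $\bm\varepsilon\sim\mathcal N(\bm 0,\sigma^2\bm I_n)$, and setting $\bm r=\breve{\bm y}-\breve{\bm X}\bm\theta$, the quadratic part contributes $-\tfrac1{\surd n}\bm u^T\breve{\bm X}^T\bm r+\tfrac1{2n}\bm u^T\breve{\bm X}^T\breve{\bm X}\bm u$. The second summand converges to $\bm u^T\bm C\bm u$ (up to the normalization used in $V$) by the hypothesis $\tfrac1n\breve{\bm X}^T\breve{\bm X}\to\bm C$. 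Since $\breve{\bm X}^T\bm r=\tilde{\bm X}^T\bm\varepsilon-\lambda_{2,n}(0,\bm\beta^T,\bm\gamma^T)^T$, the first summand equals $-\tfrac1{\surd n}\bm u^T\tilde{\bm X}^T\bm\varepsilon+\tfrac{\lambda_{2,n}}{\surd n}(\bm u_2^T\bm\beta+\bm u_3^T\bm\gamma)$; by the Lindeberg--Feller CLT (using $\tfrac1n\tilde{\bm X}^T\tilde{\bm X}\to\bm C$ and negligibility of the rows of $\tilde{\bm X}$) the first piece tends in distribution to $-\bm u^T\bm B$ with $\bm B\sim\mathcal N(\bm 0,\sigma^2\bm C)$, while the ridge remainder is deterministic and linear in $\bm u$, hence negligible under the fixed small $\lambda_2$ recommended earlier (so that $\lambda_{2,n}/\surd n\to0$).

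Because $\Omega$ is block separable over $j$, I would treat each $(\beta_j,\gamma_j)$ in isolation. For the group-$\ell_2$ term, $\lambda_{1,n}\big(\norm{(\beta_j,\gamma_j)+(u_{2j},u_{3j})/\surd n}_2-\norm{(\beta_j,\gamma_j)}_2\big)$ equals $(\lambda_{1,n}/\surd n)\norm{(u_{2j},u_{3j})}_2$ when $(\beta_j,\gamma_j)=\bm 0$, and, by differentiability of the Euclidean norm off the origin, equals $(\lambda_{1,n}/\surd n)\norm{(\beta_j,\gamma_j)}_2^{-1}(u_{2j}\beta_j+u_{3j}\gamma_j)+o(\lambda_{1,n}/\surd n)$ otherwise; since $\lambda_{1,n}/\surd n\to\lambda_1$ this reproduces the $\lambda_1$-block of $V$. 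Likewise $\lambda_{3,n}\big(|\gamma_j+u_{3j}/\surd n|-|\gamma_j|\big)$ tends to $\lambda_3|u_{3j}|$ if $\gamma_j=0$ and to $\lambda_3\,\mathrm{sign}(\gamma_j)\,u_{3j}$ if $\gamma_j\ne0$, giving the $\lambda_3$-block. Some care is needed because the indicator regimes interlock: $\gamma_j\ne0$ forces $(\beta_j,\gamma_j)\ne\bm 0$, so the genuinely distinct case to check is $(\beta_j,\gamma_j)\ne\bm 0$ with $\gamma_j=0$, where the group norm is smooth and only $|\gamma_j|$ contributes a non-smooth limit. Summing the three expansions gives $V_n(\bm u)\xrightarrow{d}V(\bm u)$.

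For (ii), $\bm u^T\bm C\bm u$ is positive definite since $\bm C$ is nonsingular, so $V$ is strictly convex and coercive, whence $\argmin V$ is almost surely unique, and the convexity lemma applies. The step I expect to be the real work is this last one: transferring finite-dimensional distributional convergence of the convex processes $V_n$ to convergence of their argmins. This genuinely needs the convexity/epi-convergence apparatus together with the a.s. uniqueness of $\argmin V$ --- precisely where nonsingularity of $\bm C$ enters. A secondary nuisance is the bookkeeping for the non-smooth penalty in the partially active regime and the verification that the absorbed ridge term is harmless, consistent with the displayed form of $V$.
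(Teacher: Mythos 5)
Your proposal is correct and follows essentially the same route as the paper: the Knight--Fu localized criterion $V_n(\bm u)=L_n(\bm\theta+\bm u/\surd n)-L_n(\bm\theta)$ with the ridge term absorbed into $\breve{\bm X}$, termwise finite-dimensional convergence of the quadratic part and of the two non-smooth penalty increments, and the convexity/epi-convergence lemma (the paper cites Geyer 1994) combined with uniqueness of $\argmin V$ from nonsingularity of $\bm C$. If anything, you supply more detail than the paper does on the CLT step, the negligibility of the ridge remainder, and the interlocking indicator regimes.
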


\begin{proof}
Define $V_n(\bm{u})$ by 
\begin{align*}
V_n(\bm{u}) = & \frac{1}{2} \left \{\norm{\bm{\varepsilon} - \breve{\bm{X}}\bm{u}/\surd{n}}_2^2 - \norm{\varepsilon}_2^2 \right \}\\
 & + \lambda_{1,n} \sum_{j=1}^{d} \{ \norm{(\beta_j+u_{2j}/\surd{n},\gamma_j+u_{3j}/\surd{n})}_2 - \norm{(\beta_j,\gamma_j}_2 \} \\
 & + \lambda_{3,n} \sum_{j=1}^{d} \{ |\gamma_j + u_{3j}/\surd{n}| - |\gamma_j| \}
\end{align*}
Note that $V_n$ is minimized at $\surd{n}(\hat{\bm{\theta}}_n - \bm{\theta})$, and 
\[
\frac{1}{2} \left \{ \norm{\bm{\varepsilon} - \breve{\bm{X}}\bm{u}/\surd{n}}_2^2 - \norm{\varepsilon}_2^2 \right \} \to_d -\bm{u}^T\bm{B} + \bm{u^TCu}
\]
with finite-dimensional convergence holding trivially. We also have 
\begin{align*}
    & \lambda_{1,n} \sum_{j=1}^{d} \{ \norm{(\beta_j+u_{2j}/\surd{n},\gamma_j+u_{3j}/\surd{n})}_2 - \norm{(\beta_j,\gamma_j}_2 \\
    & \to
    \lambda_1 \sum_{j=1}^{d} ( \norm{(u_{2j},u_{3j})}_2 I(\norm{(\beta_j,\gamma_j)}_2 = 0) \\
    & \quad + \frac{1}{\norm{(\beta_j,\gamma_j)}_2} (u_{2j}|\beta_j| + u_{3j}|\gamma_j|)I(\norm{(\beta_j,\gamma_j)} \ne 0) )
\end{align*}
and 
\begin{align*}
\lambda_{3,n} \sum_{j=1}^{d} \{ |\gamma_j + u_{3j}/\surd{n}| - |\gamma_j| \} 
 \to \lambda_3 \sum_{j=1}^{d} (|u_{3j}|I(\gamma_j = 0)+u_{3j}sign(\gamma_j)I(\gamma_j \ne 0))
\end{align*}
Hence, $V_n(\bm{u}) \to_d V(\bm{u})$ (as defined above) with the finite-dimensional convergence holding trivially. Because $V_n$ is convex and $V$ has a unique minimum, it follows \citep{geyer1994asymptotics} that 
\[
\argmin(V_n) = \surd{n}(\hat{\bm{\theta}}_n - \bm{\theta}) \to_d \argmin(V)
\]
Note that when $\lambda_l=0$, $\argmin(V)=\bm{C^{-1}B} \sim \mathcal{N}(0,\sigma^2\bm{C}^{-1})$. 
\end{proof}
\begin{remark}\label{asym_remark}
In the proof, we assume that $d$ is fixed with $n \to \infty$, though it would be prefered to have $d=d_n \to \infty$. For the logistic and Cox proportional hazard models, the penalized linear regression model is more complex such that we do not prove the asymptotic property in this article. Under some regularized conditions, however, we conjecture that the asymptotic property could be extended to the generalized penalized regression model under the exponential family \citep{zou2006adaptive}. 
\end{remark}

\section{Simulation Study}\label{sec:Simulation}
In this section, we conduct extensive simulation studies to evaluate the performance of our proposed method in four scenarios by comparing our proposed method with Lasso, group Lasso, and overlapped group Lasso methods using the public-known R packages \texttt{glmnet}, \texttt{gglasso}, and \texttt{glinternet}, respectively. 

The R package \texttt{glmnet} was established \citep{friedman2010regularization,simon2011regularization,tibshirani2012strong} for implementing the Lasso method for various types of response variable, including the continuous, categorical, time-to-event, and count data, etc. under different model assumptions, by using coordinate descent algorithm. \cite{yang2015fast} proposed a groupwise majorization-descent algorithm for building the software \texttt{gglasso} to solve the group-Lasso learning problem. \texttt{glinternet} was developed by \cite{lim2015learning} for learning pairwise interactions in a linear or logistic regression model by enforcing the strong hierarchy: a nonzero interaction effect must induce its ancestor main effects being nonzero. 

\begin{table}[h]
    \centering
     \caption{Four scenarios used in the simulation study.}
    \label{tab:model_desc}
    \begin{adjustbox}{width=\textwidth}
    \begin{tabular}{cl}
    \hline \hline 
    \multicolumn{1}{c}{Scenario} & 
    \multicolumn{1}{c}{Description} \\
    \hline 
     \multirow{3}{*}{I}
     & Weak hierarchy: Only includes 5 biomarkers with prognostic effect. \\
     & $\beta_j = 0.2$, $\gamma_j = 0$, $j=1,\ldots,5$. \\
     & $\beta_j = 0$, $\gamma_j = 0$, $j=6,\ldots,d$. \\
     \hline 
     \multirow{3}{*}{II}
     & Anti-hierarchy: Only includes 5 biomarkers with predictive effect.\\ 
     & $\beta_j = 0$, $\gamma_j = 0.2$, $j=1,\ldots,5$. \\
     & $\beta_j = 0$, $\gamma_j = 0$, $j=6,\ldots,d$. \\
     \hline 
     \multirow{4}{*}{III}
     & Strong hierarchy: Includes 5 biomarkers with both prognostic \\
     & and predictive effects.\\
     & $\beta_j = 0.2$, $\gamma_j = 0.2$, $j=1,\ldots,5$. \\
     & $\beta_j = 0$, $\gamma_j = 0$, $j=6,\ldots,d$. \\
     \hline 
     \multirow{5}{*}{IV}
     & Mixture hierarchy: A combination of Model I, II, and III \\
     & $\beta_j = 0.14$, $\gamma_j = 0$, $j=1,\ldots,5$. \\
     & $\beta_j = 0$, $\gamma_j = 0.14$, $j=6,\ldots,10$. \\
     & $\beta_j = 0.14$, $\gamma_j = 0.14$, $j=11,\ldots,15$. \\
     & $\beta_j = 0$, $\gamma_j = 0$, $j=16,\ldots,d$. \\
     \hline \hline 
    \end{tabular}
    \end{adjustbox}
\end{table}

In fact, merely can \texttt{glinternet} compare with our software \texttt{smog} fairly, because these two software both aim to solve the penalized regression model with overlapped group penalty for enforcing the hierarchy structure~\eqref{hierstr}, though they are developed by using different algorithms. \texttt{glmnet} and \texttt{gglasso} would not always honor the hierarchy structure between prognostic and predictive effects, so that their fitted models lack interpretability. However, it still merits to study the performance of these softwares on different situations. 


\subsection{simulation scenarios}\label{subsec:sim_scen}
Table~\ref{tab:model_desc} displays the four scenarios used in this simulation study. The four scenarios are designed based on different hierarchy structures between prognostic and predictive effects in the model \citep{bien2013lasso}, which are weak hierarchy, anti-hierarchy, strong hierarchy, and mixed hierarchy, respectively. The weak hierarchy refers to as $\beta_j \ne 0,\gamma_j = 0$, the anti-hierarchy is $\beta_j = 0, \gamma_j \ne 0$, the strong hierarchy is $\beta_j \ne 0, \gamma_j \ne 0$, and the mixed hierarchy contains the three aforementioned hierarchy structures simultaneously in the model.

In each scenario, the simulation study is conducted underlying different sample sizes and number of biomarkers, such that $n=100,200$ and $d=200,500,1000$. Scenario I merely includes 5 biomarkers with prognostic effects but no predictive effects, scenario II includes 5 biomarkers with predictive effects but no prognostic effects, scenario III includes 5 markers with both prognostic and predictive effects, and scenario IV is a combination of scenarios I, II, and III as in Table~\ref{tab:model_desc}. 


We generate a matrix $\bm{X}$ of $n$ by $p$ from the standard normal distribution, and a treatment vector $\bm{t}$ being either $-1$ or $1$, where $1$ represents the treatment group and $-1$ for the control group. The way to define the treatment vector is to allow the variance of the treatment to be 1. For continuous response, the response variable is generated by adding a linear combination of groups of prognostic and predictive terms and the Gaussian noise, such as $\bm{y}=\tau \bm{t} + \sum_{j=1}^{d} \{\beta_j\bm{x}_j+\gamma_j\bm{x}_j \times \bm{t}\} + \bm{\varepsilon}$, where $\bm{\varepsilon}$ follows the Gaussian distribution with mean $\bm{0}$ and variance-covariance matrix $0.2\bm{I}$. The treatment effect $\tau$ is 0.63 for all four models, while the prognostic and predictive effects for each model are set as in Table~\ref{tab:model_desc}. The signal-to-noise ratio (SNR) resulted from prognostic and predictive effects is set 1, respectively.

For binomial scenario, the response variable is generated by the logistic regression such that $\mathrm{P}(y_i=1) =1/(1+\mathrm{exp}(-y_{i}))$, where $y_{i}$ is the $i$th element in $\bm{y}$. For survival case, the response variable is generated based on the Cox proportional hazard model $h(t)=h_0(t)\mathrm{exp}(\bm{y})$ by using the R package \texttt{coxed} \citep{harden2018simulating}, in which the maximum survival time is 100, the censored ratio is 0.1, and the baseline hazard function is generated using the default flexible-hazard method as described in~\cite{harden2018simulating}. 

\subsection{Accuracy, interpretability, and predictability}\label{subsec:sim_sense}

For each scenario, we generate a training and testing data in the same way, where a model is fitted on the training data that is then used to investigate the prediction power for the testing data. In order to investigate the performances of different methods on selecting the correct prognostic and predictive biomarkers from a large number of biomarkers, we calculate the $\mathrm{F_1}$ scores for the prognostic and predictive effects, the hierarchy enforcement rate, and the mean-square errors on the testing data, respectively. 

\begin{figure}
        \centering
        \includegraphics[width=\linewidth,height = 3in]{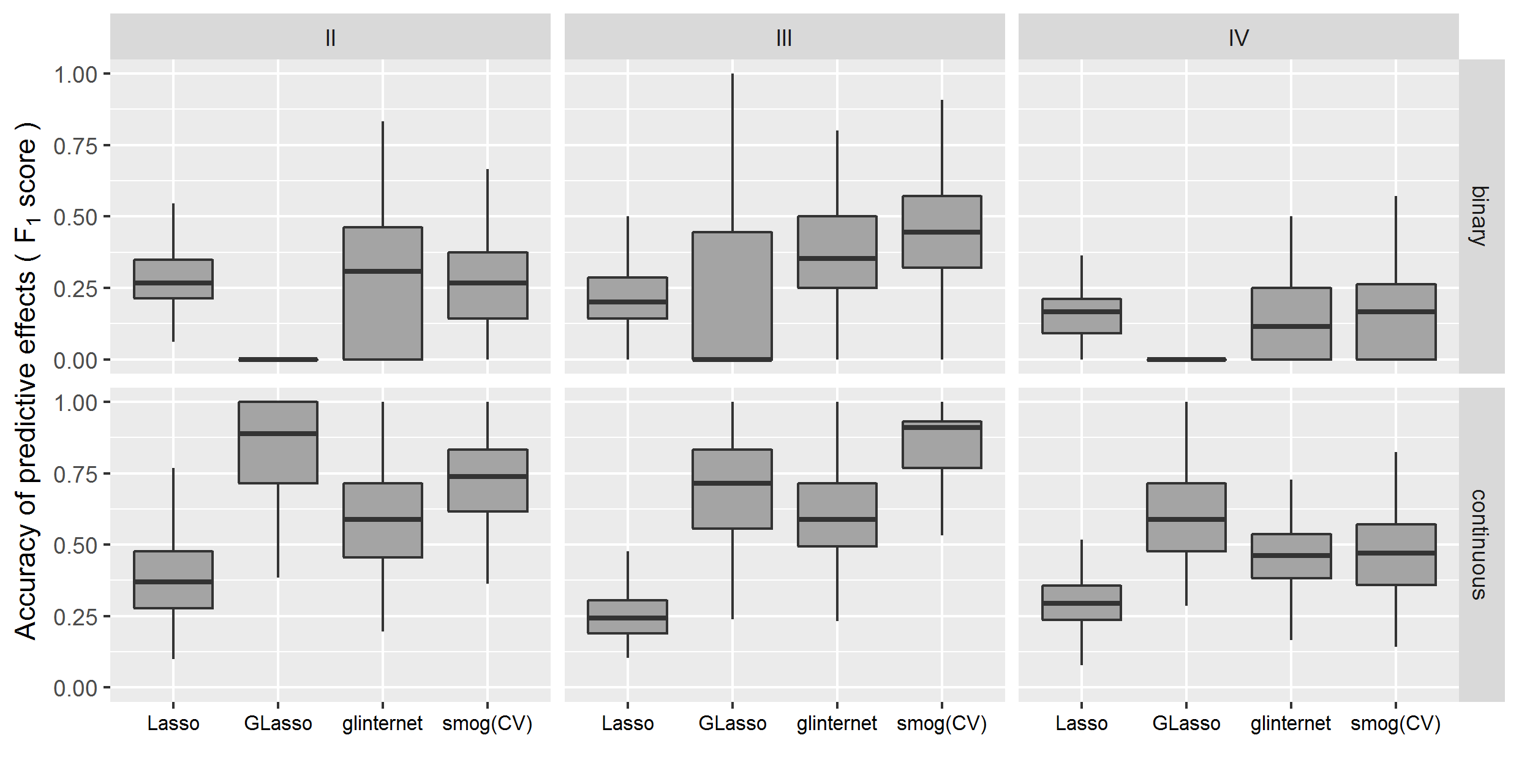}
        \caption{Accuracy of selected predictive effects ($\mathrm{F_1}$ score) for continuous and binary data: n=200, d=1000.}
        \label{fig:pred_accu2}
\end{figure}

\setlength\extrarowheight{4pt}
\begin{table}[hp]
\centering
\caption{\footnotesize{$\mathrm{F}_1$ scores for prognostic and predictive effects, hierarchy enforcement rate, and mean square errors (negative log-likelihood values) for Lasso, group Lasso, Glinternet, and smog using cross-validation model assessments in four scenarios for continuous and binomial data.}}
\label{tab:sim_sum}
\resizebox{\textwidth}{!}{
\begin{tabular}{@{}C{1.25cm}*{22}{c}@{}}
Continuous:
    &  &  &  &  &  &  &  &  &  &  &  &  &  &  &  &  &  &  &  &  &  & \\
  \hline \hline 
 \multirow{2}{*}{Scenario} & 
 \multirow{2}{*}{n} & 
 \multirow{2}{*}{p} 
 & & \multicolumn{4}{c}{Lasso} 
 & & \multicolumn{4}{c}{GLasso} 
 & & \multicolumn{4}{c}{Glinternet} 
 & & \multicolumn{4}{c}{smog(CV)} \\
\cline{5-8} \cline{10-13} \cline{15-18} \cline{20-23}
 & & & 
& prog & pred & hierarchy & mse & 
& prog & pred & hierarchy & mse & 
& prog & pred & hierarchy & mse & 
& prog & pred & hierarchy & mse \\ 
  \hline
  \multirow{6}{*}{I} 
  & 100 & 200 &  & 
  0.47 & 0.00 & 0.01 & 0.31 &  
  & \textbf{0.86} & 0.00 & 0.54 & \textbf{0.31} &  
  & 0.38 & 0.00 & 1.00 & 0.29 &  
  & \textit{\textbf{0.49}} & 0.00 & 1.00 & \textit{\textbf{0.27}} \\ 
   & 100 & 500 &  
   & 0.40 & 0.00 & 0.02 & 0.35 &  
   & \textbf{0.77} & 0.00 & 0.50 & \textbf{0.33} &  
   & 0.30 & 0.00 & 1.00 & 0.32 &  
   & \textit{\textbf{0.46}} & 0.00 & 1.00 & \textit{\textbf{0.30}} \\ 
   & 100 & 1000 &  
   & 0.36 & 0.00 & 0.01 & 0.37 &  
   & \textbf{0.74} & 0.00 & 0.58 & \textbf{0.36} &  
   & 0.27 & 0.00 & 1.00 & 0.35 &  
   & \textit{\textbf{0.44}} & 0.00 & 1.00 & \textit{\textbf{0.32}} \\ 
   & 200 & 200 &  
   & 0.48 & 0.00 & 0.00 & 0.25 & 
   & \textbf{0.94} & 0.00 & 0.76 & \textbf{\textit{0.25}} &  
   & 0.40 & 0.00 & 1.00 & 0.24 &  
   & \textit{\textbf{0.70}} & 0.00 & 1.00 & \textit{\textbf{0.24}} \\ 
   & 200 & 500 &  
   & 0.42 & 0.00 & 0.00 & 0.26 &  
   & \textbf{0.94} & 0.00 & 0.72 & \textbf{0.26} &  
   & 0.33 & 0.00 & 1.00 & 0.25 &  
   & \textit{\textbf{0.58}} & 0.00 & 1.00 & \textit{\textbf{0.24}} \\ 
   & 200 & 1000 &  
   & 0.35 & 0.00 & 0.00 & 0.27 &  
   & \textbf{0.92} & 0.00 & 0.68 & \textbf{0.27} &  
   & 0.27 & 0.00 & 1.00 & 0.26 &  
   & \textit{\textbf{0.60}} & 0.00 & 1.00 & \textit{\textbf{0.24}} \\ 
   \hline 
   \multirow{6}{*}{II}
   & 100 & 200 &  
   & 0.00 & 0.43 & 0.00 & 0.32 &  
   & 0.00 & \textbf{0.73} & 0.01 & \textbf{0.34} &  
   & 0.00 & 0.53 & 1.00 & 0.35 &  
   & 0.00 & \textbf{0.53} & 1.00 & \textbf{0.36} \\ 
   & 100 & 500 &  
   & 0.00 & 0.41 & 0.00 & 0.33 &  
   & 0.00 & \textbf{0.68} & 0.08 & \textbf{0.37} &  
   & 0.00 & 0.47 & 1.00 & 0.38 &  
   & 0.00 & \textit{\textbf{0.44}} & 1.00 & \textit{\textbf{0.37}} \\ 
   & 100 & 1000 &  
   & 0.00 & 0.35 & 0.00 & 0.37 &  
   & 0.00 & \textbf{0.59} & 0.11 & \textbf{0.40} &  
   & 0.00 & 0.38 & 1.00 & 0.42 &  
   & 0.00 & \textit{\textbf{0.30}} & 1.00 & \textit{\textbf{0.40}} \\ 
   & 200 & 200 &  
   & 0.00 & 0.47 & 0.00 & 0.25 &  
   & 0.00 & \textbf{0.89} & 0.00 & \textbf{0.27} &  
   & 0.00 & 0.57 & 1.00 & 0.26 &  
   & 0.00 & \textit{\textbf{0.68}} & 1.00 & \textit{\textbf{0.27}} \\ 
   & 200 & 500 &  
   & 0.00 & 0.41 & 0.00 & 0.26 &  
   & 0.00 & \textbf{0.87} & 0.00 & \textbf{0.28} &  
   & 0.00 & 0.59 & 1.00 & 0.28 &  
   & 0.00 & \textit{\textbf{0.70}} & 1.00 & \textit{\textbf{0.29}} \\ 
   & 200 & 1000 &  
   & 0.00 & 0.38 & 0.00 & 0.27 & 
   & 0.00 & \textbf{0.85} & 0.00 & \textbf{0.29} &  
   & 0.00 & 0.59 & 1.00 & 0.30 &  
   & 0.00 & \textit{\textbf{0.74}} & 1.00 & \textit{\textbf{0.30}} \\
   \hline 
   \multirow{6}{*}{III}
   & 100 & 200 &  
   & 0.34 & 0.34 & 0.01 & 0.44 &  
   & 0.64 & \textbf{0.61} & 0.22 & \textbf{0.41} & 
   & 0.28 & 0.52 & 1.00 & 0.40 & 
   & 0.55 & \textbf{0.67} & 1.00 & \textbf{0.42} \\ 
   & 100 & 500 &  
   & 0.30 & 0.29 & 0.01 & 0.52 &  
   & 0.61 & \textbf{0.54} & 0.30 & \textbf{0.49} &  
   & 0.25 & 0.46 & 1.00 & 0.48 &  
   & 0.49 & \textit{\textbf{0.57}} & 1.00 & \textit{\textbf{0.47}} \\ 
   & 100 & 1000 &  
   & 0.25 & 0.24 & 0.06 & 0.57 &  
   & 0.55 & \textbf{0.41} & 0.32 & \textbf{0.55} & 
   & 0.20 & 0.40 & 1.00 & 0.54 & 
   & 0.42 & \textit{\textbf{0.47}} & 1.00 & \textit{\textbf{0.51}} \\ 
   & 200 & 200 &  
   & 0.35 & 0.35 & 0.00 & 0.27 &  
   & 0.70 & \textbf{0.74} & 0.22 & \textbf{0.28} & 
   & 0.29 & 0.60 & 1.00 & 0.27 & 
   & 0.74 & \textit{\textbf{0.92}} & 1.00 & \textit{\textbf{0.30}} \\ 
   & 200 & 500 &  
   & 0.30 & 0.30 & 0.00 & 0.30 & 
   & 0.66 & \textbf{0.72} & 0.18 & \textbf{0.30} & 
   & 0.25 & 0.64 & 1.00 & 0.29 & 
   & 0.66 & \textit{\textbf{0.88}} & 1.00 & \textit{\textbf{0.31}} \\ 
   & 200 & 1000 &  
   & 0.25 & 0.26 & 0.00 & 0.33 & 
   & 0.64 & \textbf{0.68} & 0.14 & \textbf{0.31} & 
   & 0.20 & 0.59 & 1.00 & 0.31 & 
   & 0.70 & \textit{\textbf{0.87}} & 1.00 & \textit{\textbf{0.32}} \\ 
   \hline 
   \multirow{6}{*}{IV}
   & 100 & 200 &  
   & 0.34 & 0.33 & 0.01 & 0.56 & 
   & 0.47 & \textbf{0.42} & 0.21 & \textbf{0.56} & 
   & 0.30 & 0.36 & 1.00 & 0.55 & 
   & 0.39 & \textit{\textbf{0.33}} & 1.00 & \textit{\textbf{0.55}} \\ 
   & 100 & 500 & 
   & 0.24 & 0.21 & 0.03 & 0.62 & 
   & 0.36 & \textbf{0.26} & 0.36 & \textbf{0.64} & 
   & 0.22 & 0.23 & 1.00 & 0.62 & 
   & 0.28 & \textit{\textbf{0.22}} & 1.00 & \textit{\textbf{0.59}} \\ 
   & 100 & 1000 &  
   & 0.14 & 0.14 & 0.02 & 0.65 & 
   & 0.28 & \textbf{0.15} & 0.57 & \textbf{0.69} & 
   & 0.14 & 0.13 & 1.00 & 0.64 & 
   & 0.20 & \textit{\textbf{0.15}} & 1.00 & \textit{\textbf{0.60}} \\ 
   & 200 & 200 &  
   & 0.41 & 0.41 & 0.00 & 0.35 & 
   & 0.67 & \textbf{0.68} & 0.00 & \textbf{0.34} &  
   & 0.30 & 0.56 & 1.00 & 0.36 & 
   & 0.58 & \textit{\textbf{0.68}} & 1.00 & \textit{\textbf{0.42}} \\ 
   & 200 & 500 & 
   & 0.35 & 0.35 & 0.00 & 0.43 & 
   & 0.61 & \textbf{0.65} & 0.00 & \textbf{0.39} & 
   & 0.26 & 0.53 & 1.00 & 0.42 & 
   & 0.46 & \textit{\textbf{0.55}} & 1.00 & \textit{\textbf{0.46}} \\ 
   & 200 & 1000 &  
   & 0.30 & 0.30 & 0.00 & 0.48 & 
   & 0.58 & \textbf{0.59} & 0.02 & \textbf{0.42} & 
   & 0.22 & 0.46 & 1.00 & 0.47 & 
   & 0.45 & \textit{\textbf{0.46}} & 1.00 & \textit{\textbf{0.49}}\\ 
   \hline \hline 
   &  &  &  &  &  &  &  &  &  &  &  &  &  &  &  &  &  &  &  &  &  & \\ 
   
    Binomial:
    &  &  &  &  &  &  &  &  &  &  &  &  &  &  &  &  &  &  &  &  &  & \\ 
    \hline \hline 
      \multirow{6}{*}{I} 
 & 100 & 200 &  
 & 0.30 & 0.00 & 0.11 & 35.75 &  
 & \textbf{0.37} & 0.00 & 0.94 & \textbf{55.12} &  
 & 0.27 & 0.00 & 1.00 & 34.38 &  
 & \textit{\textbf{0.33}} & 0.00 & 1.00 & \textit{\textbf{28.35}} \\ 
 & 100 & 500 &  
 & 0.20 & 0.00 & 0.06 & 36.85 &  
 & \textbf{0.36} & 0.00 & 0.96 & \textbf{55.72} &  
 & 0.17 & 0.00 & 1.00 & 35.69 &  
 & \textit{\textbf{0.23}} & 0.00 & 1.00 & \textit{\textbf{28.86}} \\ 
 & 100 & 1000 &  
 & 0.13 & 0.00 & 0.08 & 38.56 &  
 & \textbf{0.35} & 0.00 & 0.96 & \textbf{56.41} &  
 & 0.12 & 0.00 & 1.00 & 37.20 &  
 & \textit{\textbf{0.16}} & 0.00 & 1.00 & \textit{\textbf{30.19}} \\ 
 & 200 & 200 &  
 & 0.42 & 0.00 & 0.02 & 57.43 &  
 & \textbf{0.44} & 0.00 & 0.93 & \textbf{106.98} &  
 & 0.34 & 0.00 & 1.00 & 55.36 &  
 & \textit{\textbf{0.42}} & 0.00 & 1.00 & \textit{\textbf{48.75}} \\ 
 & 200 & 500 &  
 & 0.34 & 0.00 & 0.03 & 61.43 &  
 & \textbf{0.42} & 0.00 & 0.94 & \textbf{109.84} &  
 & 0.26 & 0.00 & 1.00 & 59.21 &  
 & \textit{\textbf{0.35}} & 0.00 & 1.00 & \textit{\textbf{51.94}} \\ 
 & 200 & 1000 &  
 & 0.28 & 0.00 & 0.06 & 63.46 &  
 & \textbf{0.37} & 0.00 & 0.92 & \textbf{112.47} &  
 & 0.23 & 0.00 & 1.00 & 61.40 &  
 & \textit{\textbf{0.31}} & 0.00 & 1.00 & \textit{\textbf{53.46}} \\ 
   \hline 
    \multirow{6}{*}{II} 
   & 100 & 200 &  
   & 0.00 & \textbf{0.27} & 0.02 & \textbf{36.03} &  
   & 0.00 & 0.03 & 0.95 & 57.20 &  
   & 0.00 & 0.21 & 1.00 & 36.71 &  
   & 0.00 & \textit{\textbf{0.20}} & 1.00 & \textit{\textbf{30.92}} \\ 
   & 100 & 500 &  
   & 0.00 & \textbf{0.19} & 0.03 & \textbf{37.32} &  
   & 0.00 & 0.03 & 0.94 & 57.56 &  
   & 0.00 & 0.13 & 1.00 & 38.08 &  
   & 0.00 & \textit{\textbf{0.12}} & 1.00 & \textit{\textbf{31.01}} \\ 
   & 100 & 1000 &  
   & 0.00 & \textbf{0.14} & 0.03 & \textbf{38.66} &  
   & 0.00 & 0.03 & 0.94 & 57.08 &  
   & 0.00 & 0.08 & 1.00 & 39.30 &  
   & 0.00 & \textit{\textbf{0.08}} & 1.00 & \textit{\textbf{30.86}} \\ 
   & 200 & 200 &  
   & 0.00 & 0.41 & 0.00 & 57.62 &  
   & 0.00 & 0.09 & 0.88 & 114.58 &  
   & 0.00 & \textbf{0.46} & 1.00 & \textbf{60.94} &  
   & 0.00 & \textit{\textbf{0.45}} & 1.00 & \textit{\textbf{55.93}} \\ 
   & 200 & 500 &  
   & 0.00 & 0.36 & 0.01 & 61.02 &  
   & 0.00 & 0.05 & 0.94 & 117.10 &  
   & 0.00 & \textbf{0.38} & 1.00 & \textbf{65.67} &  
   & 0.00 & \textit{\textbf{0.33}} & 1.00 & \textit{\textbf{58.32}} \\ 
   & 200 & 1000 &  
   & 0.00 & 0.28 & 0.02 & 64.15 &  
   & 0.00 & 0.04 & 0.94 & 117.31 &  
   & 0.00 & \textbf{0.29} & 1.00 & \textbf{68.33} &  
   & 0.00 & \textit{\textbf{0.25}} & 1.00 & \textit{\textbf{59.52}} \\ 
   \hline 
    \multirow{6}{*}{III} 
   & 100 & 200 &  
   & 0.21 & 0.21 & 0.06 & 36.04 &  
   & 0.40 & 0.11 & 0.78 & 45.79 &  
   & 0.25 & \textbf{0.27} & 1.00 & \textbf{35.18} &  
   & 0.33 & \textit{\textbf{0.34}} & 1.00 & \textit{\textbf{35.44}} \\ 
   & 100 & 500 &  
   & 0.14 & 0.14 & 0.05 & 38.38 &  
   & 0.37 & 0.09 & 0.74 & 47.10 &  
   & 0.15 & \textbf{0.15} & 1.00 & \textbf{37.94} &  
   & 0.22 & \textit{\textbf{0.24}} & 1.00 & \textit{\textbf{36.98}} \\ 
   & 100 & 1000 &  
   & 0.08 & 0.09 & 0.07 & 40.42 &  
   & 0.36 & 0.06 & 0.76 & 47.85 &  
   & 0.09 & \textbf{0.10} & 1.00 & \textbf{39.82} &  
   & 0.15 & \textit{\textbf{0.16}} & 1.00 & \textit{\textbf{37.92}} \\ 
   & 200 & 200 &  
   & 0.34 & 0.35 & 0.01 & 56.97 &  
   & 0.56 & 0.35 & 0.54 & 78.57 &  
   & 0.32 & \textbf{0.47} & 1.00 & \textbf{54.85} &  
   & 0.48 & \textit{\textbf{0.58}} & 1.00 & \textit{\textbf{59.41}} \\ 
   & 200 & 500 &  
   & 0.27 & 0.27 & 0.01 & 62.74 &  
   & 0.52 & 0.25 & 0.66 & 83.36 &  
   & 0.27 & \textbf{0.43} & 1.00 & \textbf{60.79} &  
   & 0.42 & \textit{\textbf{0.51}} & 1.00 & \textit{\textbf{64.47}} \\ 
   & 200 & 1000 &  
   & 0.20 & 0.22 & 0.01 & 65.00 &  
   & 0.44 & 0.20 & 0.70 & 85.92 &  
   & 0.21 & \textbf{0.36} & 1.00 & \textbf{63.77} &  
   & 0.36 & \textit{\textbf{0.43}} & 1.00 & \textit{\textbf{67.76}} \\ 
   \hline 
    \multirow{6}{*}{IV} 
   & 100 & 200 &  
   & 0.16 & \textbf{0.15} & 0.06 & \textbf{47.04} &  
   & 0.21 & 0.03 & 0.85 & 56.46 &  
   & 0.17 & 0.11 & 1.00 & 46.59 &  
   & 0.22 & \textit{\textbf{0.12}} & 1.00 & \textit{\textbf{42.28}} \\ 
   & 100 & 500 &  
   & 0.09 & \textbf{0.08} & 0.06 & \textbf{49.42} &  
   & 0.20 & 0.02 & 0.83 & 57.33 &  
   & 0.09 & 0.04 & 1.00 & 48.48 &  
   & 0.12 & \textit{\textbf{0.08}} & 1.00 & \textit{\textbf{43.09}} \\ 
   & 100 & 1000 &  
   & 0.05 & \textbf{0.05} & 0.07 & \textbf{50.18} &  
   & 0.18 & 0.02 & 0.89 & 57.57 &  
   & 0.05 & 0.03 & 1.00 & 49.60 &  
   & 0.08 & \textit{\textbf{0.05}} & 1.00 & \textit{\textbf{43.68}} \\ 
   & 200 & 200 &  
   & 0.31 & 0.32 & 0.01 & 80.78 &  
   & 0.24 & 0.09 & 0.83 & 110.55 &  
   & 0.31 & \textbf{0.33} & 1.00 & \textbf{79.71} &  
   & 0.37 & \textit{\textbf{0.28}} & 1.00 & \textit{\textbf{78.35}} \\ 
   & 200 & 500 &  
   & 0.18 & 0.21 & 0.04 & 86.18 &  
   & 0.21 & 0.06 & 0.88 & 113.42 &  
   & 0.19 & \textbf{0.21} & 1.00 & \textbf{85.53} &  
   & 0.26 & \textit{\textbf{0.23}} & 1.00 & \textit{\textbf{80.72}} \\ 
   & 200 & 1000 &  
   & 0.15 & \textbf{0.16} & 0.06 & \textbf{89.52} &  
   & 0.20 & 0.03 & 0.91 & 113.44 &  
   & 0.15 & 0.13 & 1.00 & 88.62 &  
   & 0.21 & \textit{\textbf{0.17}} & 1.00 & \textit{\textbf{83.02}} \\ 
    \hline \hline 
\end{tabular}
}
\end{table}

$\mathrm{F_1}$ score is the harmonic average of the sensitivity and the positive predictive value, where the sensitivity refers to be the proportion of the correct detected prognostic (predictive) effects to the total number of true prognostic (predictive) effects, and the positive predictive value refers to be the proportion of the correct detected prognostic (predictive) effects to the total number of claimed prognostic (predictive) effects, respectively. $\mathrm{F_1}$ score reaches 1 for the perfect case and 0 for the worst. The hierarchy enforcement rate is the proportion of honoring the hierarchy structure~\eqref{hierstr} over all the repetitions of the simulation, which attains 1 for strictly enforcing the hierarchy, and 0 for fully disobeying. We accordingly repeat 200 times of the simulation for each of the four scenarios in section~\ref{subsec:sim_scen} for continuous, binary, and time-to-event data. 

In Section~\ref{sec:tuning}, we proposed several model assessments and a greedy path search for the optimal penalty parameters. In the supplemental material, we investigate the performance of these model assessments using our proposed method \texttt{smog}, such as the mean squared error of the response in five-fold cross-validations (CV), the correction AIC (cAIC), AIC, BIC, and the generalized cross-validation score (GCV), respectively. Comprehensively speaking, these model assessments have very comparable performances with respect to the $\mathrm{F_1}$ scores of prognostic and predictive effects, the hierarchy enforcement rates, and the predictive power. Obviously, the cross-validation criterion is more computationally expensive than other model assessments.

Table~\ref{tab:sim_sum} compare our proposed method using cross-validation model assessment with the competitor methods, such as \texttt{Lasso}, group-Lasso without honoring hierarchy (\texttt{GLasso}), and group-Lasso with honoring strong hierarchy (\texttt{Glinternet}) with respect to the $\mathrm{F_1}$ scores of prognostic and predictive effect, the hierarchy enforcement rate, and the mean squared-error of the testing data, respectively. Figure~\ref{fig:pred_accu2} displays the accuracy of selected predictive biomarkers in $\mathrm{F_1}$ score for continuous and binary data. Table~\ref{tab:sim_sum} summarizes for continuous and binary data, and the relevant simulation results for time-to-event data can be found in the supplementary file, in which we merely compare our proposed method with \texttt{Lasso} because neither \texttt{GLasso} nor \texttt{Glinternet} can deal with the time-to-event data. 

Thoroughly speaking, the simulation results demonstrate that our method \texttt{smog} has very competitive, if not the best, predictive performance than the competitors in all scenarios for different types of continuous, binary, and time-to-event data, and \texttt{smog} always strictly obey the hierarchy structures~\eqref{hierstr}. Most importantly, \texttt{smog} dominates, if not overwhelmingly, on the accuracy of detecting prognostic and predictive effects in most cases, especially, when the strong hierarchy exists in biomarkers in scenarios III and IV. 

On the other hand, in scenario I, when the weak hierarchy lies in biomarkers, \texttt{GLasso} has somewhat bigger $\mathrm{F_1}$ scores for prognostic effects than \texttt{smog}, though it never strictly obey the hierarchy enforcement. Nonetheless, because \texttt{GLasso} would select the prognostic and predictive effects simultaneously, higher $\mathrm{F}_1$ score on prognostic effect would select more false predictive biomarkers, while $\mathrm{F_1}$ score for predictive effect would be always zero whatever the number of false predictive biomarkers in that scenario I does not have any predictive biomarker. In scenario II, \texttt{smog} has better or commensurate performance on the accuracy of detecting predictive effects than \texttt{Glinternet}, while \texttt{smog} has somewhat smaller $\mathrm{F_1}$ scores than \texttt{GLasso}.

\section{Biomarker study and subgroup analysis for Head and neck cancer}\label{sec:Data analysis}
\subsection{Head and neck cancer}\label{subsec:hnscc}
Head and neck squamous cell carcinoma (HNSCC) is the seventh most frequent cancer worldwide and accounts for about 4\% of all cancers in the United States. Generally, radiation therapy is a useful treatment for the patients with the HNSCC. However, recent research finds that the HNSCC patient population may have different response to the radiation treatment based on their genetic characteristics. For example, the patients with HPV-positive oropharyngeal tumors have a better prognosis than those with those with HPV-negative oropharyngeal tumors. Hence, it is crucial to identify the genetic subpopulation of HNSCC patients with the radiation treatment effect for the radiation therapy. 

\subsection{Data Preprocess}\label{subsec:hnscc_preproc}
We screened and collected 510 HNSCC patients clinical data from the TCGA and cBioPortal websites, which included 301 patients with radiation therapy and 209 patients without radiation therapy. The data contained the survival information and the whole genomic profile that consist of $20,531$ gene expressions for each patient. In order to better implement our proposed approach, we preprocessed the gene expression data by removing the genes with missing values, and merely keeping those genes with the coefficient of variation (CV) between 0.7 and 10. Then, we took the natural log-transformation on the remaining $3,973$ genes and then normalized those gene expressions. To the end, we fed these normalized $3,973$ genes into our algorithm. For the sake of the external validation on the performance of our method, we randomly split the whole data into a learning data set and a validation data set with equal sample size. 

\subsection{Biomarker study and subgroup analysis}\label{subsec:hnscc_sum}
Fitting on the learning data using our method, we obtained a sparse model that contained 25 predictive genes with both prognostic and predictive effects, and 14 prognostic genes with merely prognostic effects. Figure~\ref{fig:HNSC_opt} illustrates the greedy pathway search for the optimal $\lambda_1$ and $\lambda_3$ based on the model assessment GCV, where the minimum GCV is $0.0392$ that is obtained at $\lambda_1^{opt}=0.174$ and $\lambda_3^{opt}=0.215$. $\lambda_2$ is kept being zero by default. And the fitted model honors the hierarchy structure~\eqref{hierstr} such that an existing predictive effect must enforce the corresponding prognostic effect in the model, and the estimated treatment effect is $-0.001$, which should not be penalized during the model fitting. The details can be seen in the supplementary file. 

Notice that most of the 25 predictive genes have negative predictive effects, which implies that over-expressed these genes could benefit from the radiation therapy, except SRPX, SPOCK1, and one non-coding gene LOC115165. A lot of selected genes like KRT13, CDKN2A, EPHX3, PRSS12 are already discovered to regulate the progression of the head and neck squamous carcinoma cells. 
Recent research finds that KRT13 is repressed in oral squamous cell carcinoma, which may imply over-expressed KRT13 could alleviate the head and neck cancer \citep{naganuma2014epigenetic}. CDKN2A is a tumor-suppressor gene in head and neck squamous cell carcinoma, which can make p16(INK4A) and p14(ARF) to control the cell division, and further stimulate or block cell cycle progression \citep{ai2003p16}. 

The magnitude of the coefficients in the fitted model is quite small that may be penalized too much. We refit the training data on the selected 39 genes by using small penalty parameters such as $\lambda_1=\lambda_3=10^{-6}$, which sheds similar light on the relaxed Lasso \citep{meinshausen2007relaxed}. The prognostic and predictive effects for the selected 39 genes are all nonzeros in the refitted model. In order to investigate the predictive power of the fitted model, we calculate the individual treatment contrast (global predictive effect) by multiplying the selected predictive gene expressions by their corresponding predictive effects, termed $z_i$ for each patient $i=1,\ldots,n$. 
We dichotomize all patients into the biomarker-positive and biomarker-negative groups, where the biomarker-positive group is $\{i: z_i < 0\}$, and the biomarker-negative group is $\{i: z_i > 0\}$, respectively.

Figure~\ref{fig:hnsc_surv} demonstrates that the Kaplan-Meier curves for biomarker-positive subgroup and biomarker-negative subgroup for the training data (first row) and the testing data (second row), respectively. In the training data, the point estimates and 95\% confidence intervals of the relative risk for the positive-biomarker patients and negative-biomarker patients are $0.18$ and $(0.09,0.34)$, and $2.29$ and $(1.23,4.28)$, respectively; in the testing data, the point estimates and 95\% confidence intervals of the relative risk for the positive-biomarker patients and negative-biomarker patients are $0.58$ and $(0.34,0.99)$, and $0.87$ and $(0.51,1.51)$, respectively. There might exist overfitting issues in the training data somewhat, however, whatever in the training and testing data, the biomarker-positive patients benefit more from the radiation treatment, yet the biomarker-negative patients benefit from the control. Collectively, this real case study for HNSCC demonstrates that our method is very competitive to detect very powerful predictive biomarkers for cancer in practice.

\begin{figure}
    \centering
    \includegraphics[width=0.9\textwidth,height=3in]{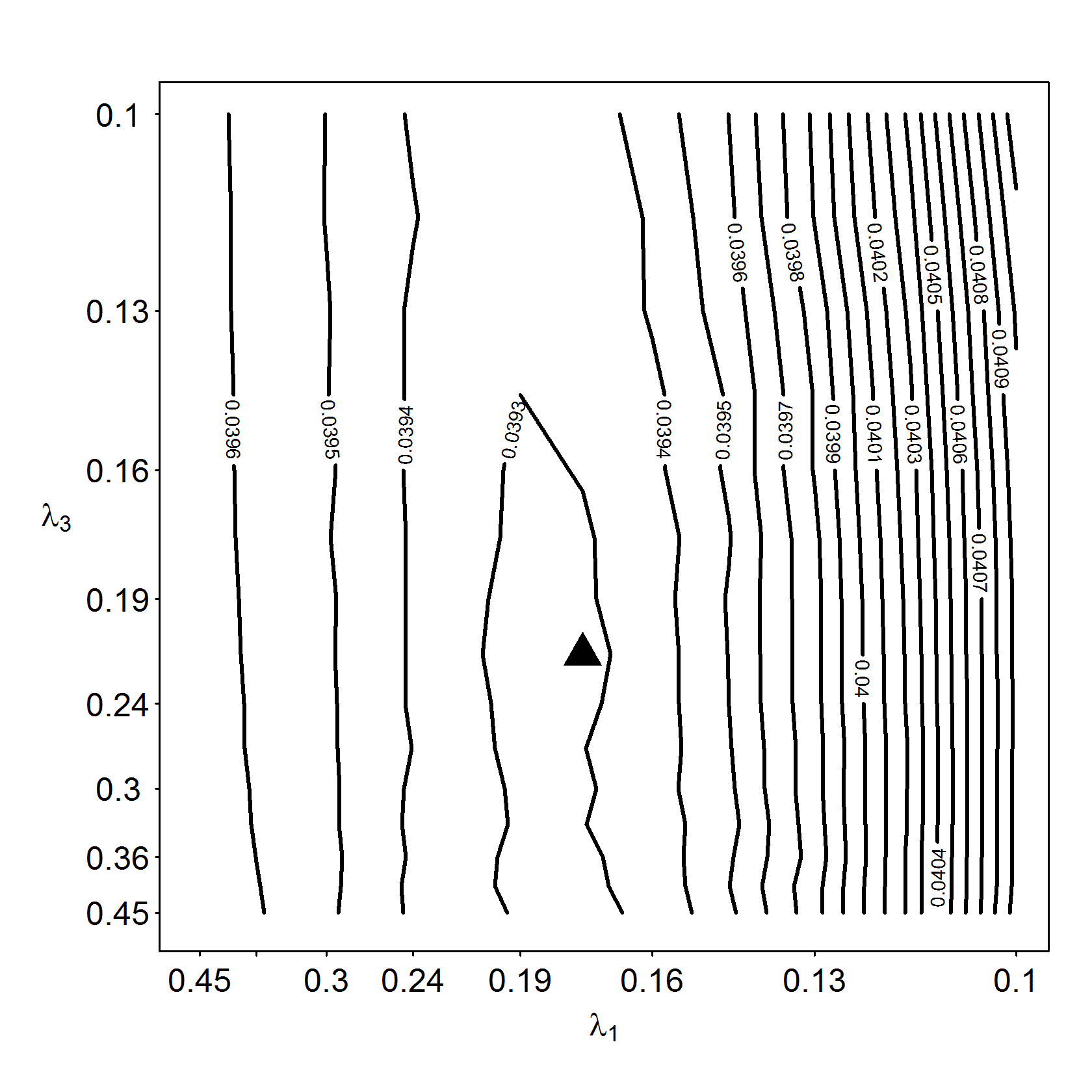}
    \caption{The greedy pathway search for $\lambda_1$ and $\lambda_3$ for modeling the head and neck cancer data based on the model assessment GCV. The minimum GCV locates at the dark triangle sign, where $\lambda_1^{opt}=0.174$ and $\lambda_2^{opt}=0.215$, respectively. }
    \label{fig:HNSC_opt}
\end{figure}

\begin{figure}
    \centering
    \includegraphics[width=0.5\textwidth,height=2in]{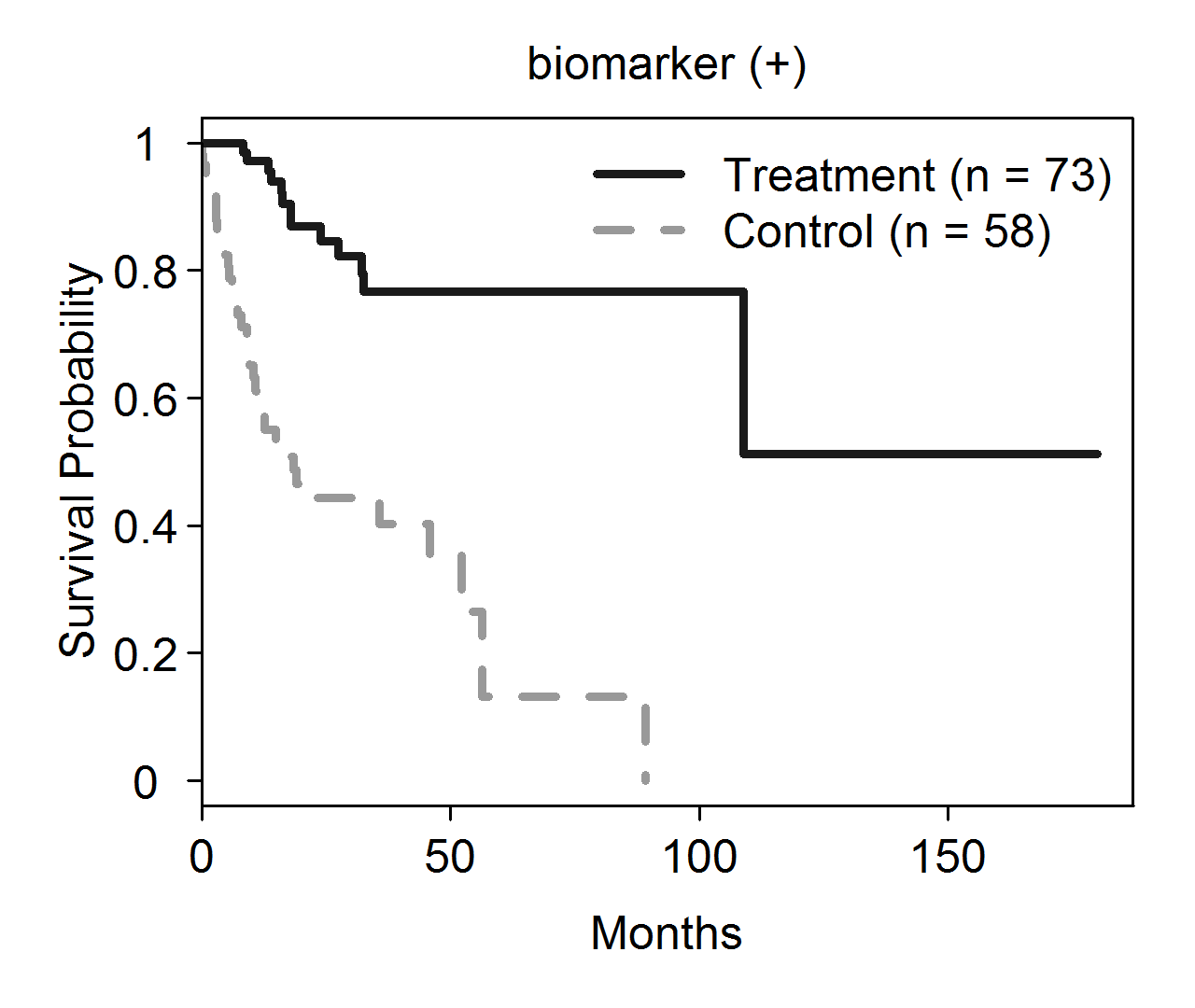}%
    \includegraphics[width=0.5\textwidth,height=2in]{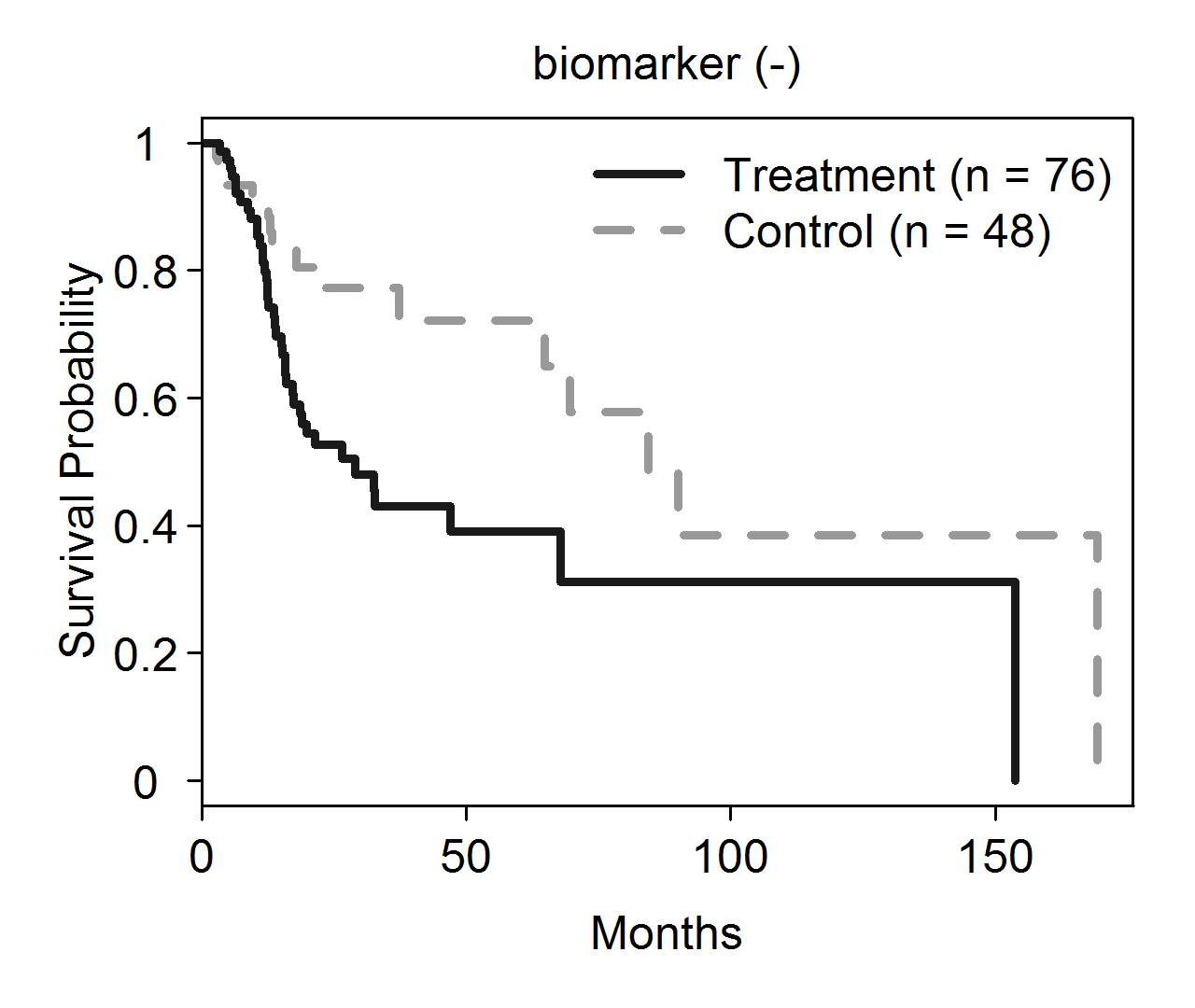}\\
    \includegraphics[width=0.5\textwidth,height=2in]{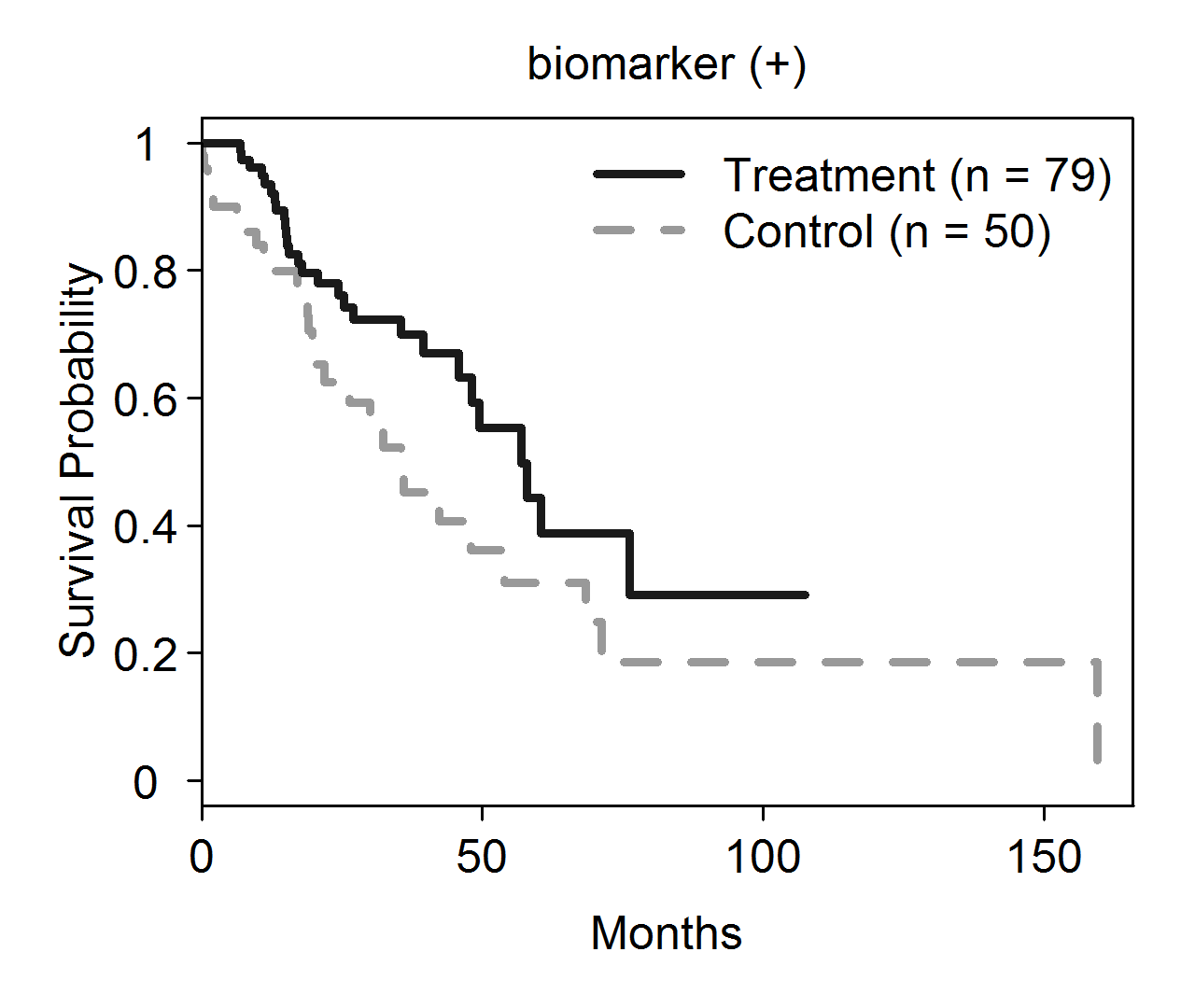}%
    \includegraphics[width=0.5\textwidth,height=2in]{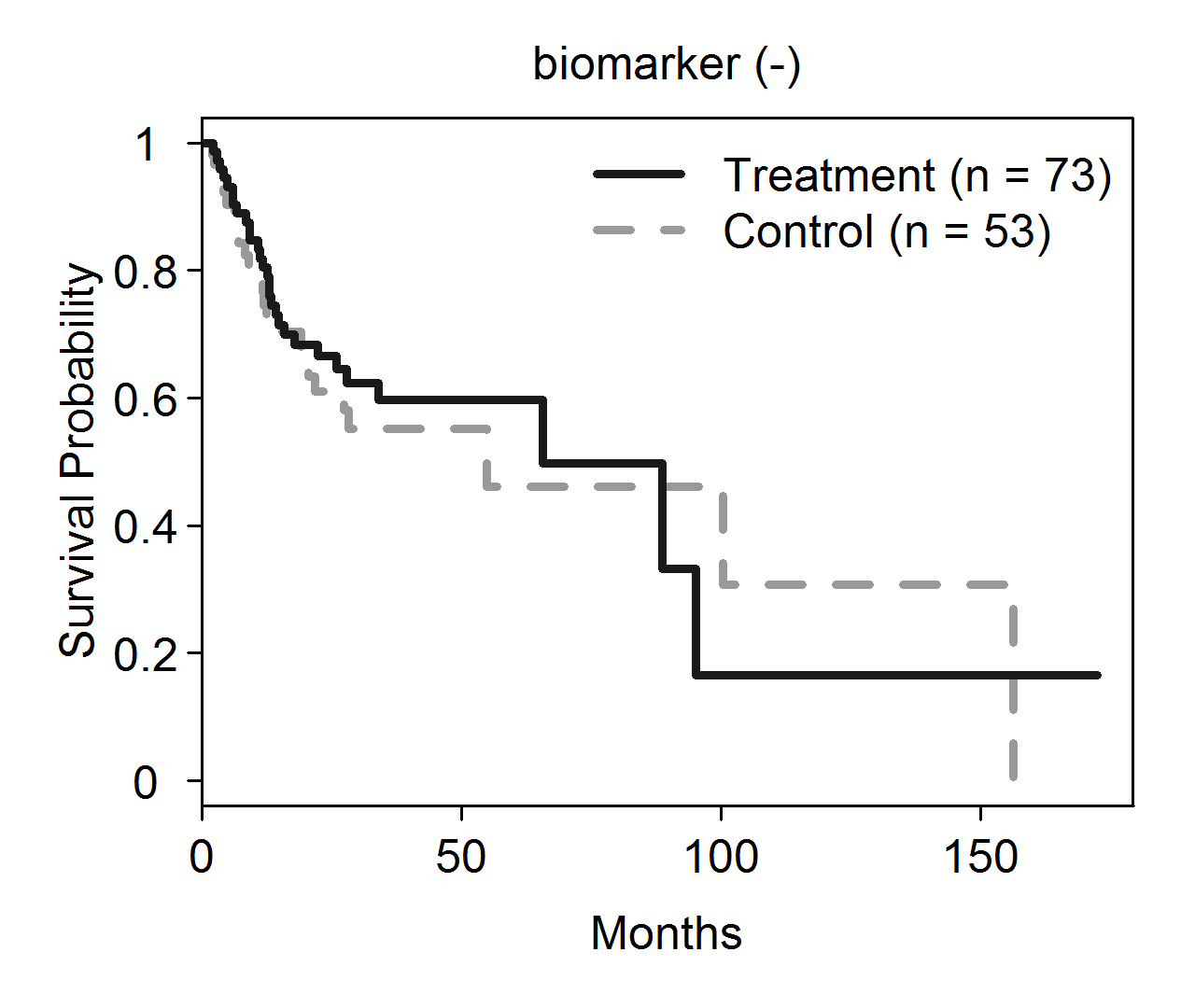}\\
    \caption{Kaplan-Meier curves for the training (first row) and testing (second row) HNSCC data. The first row displays the Kaplan-Meier curves for biomarker-positive and biomarker-negative groups for the training data, where the 95\% confidence intervals of the relative risk for the radiation treatment versus no radiation treatment are $(0.09,0.34)$ and $(1.23,4.28)$, respectively; the second row represents for the biomarker-positive patients and the biomarker-negative patients, where the 95\% confidence intervals of the relative risks for the radiation treatment versus no radiation treatment are $(0.34,0.99)$ and $(0.51,1.51)$, respectively.}
    \label{fig:hnsc_surv}
\end{figure}



\section{Discussion} \label{sec:Discuss}
In this article, we propose a novel penalized regression model with a specific penalty function for enforcing the hierarchy structure between the predictive and prognostic effects such that a nonzero predictive effect must induce its ancestor prognostic effect nonzero in the model. For obtaining the minimizer of the objective function, we establish an integrative optimization algorithm by blending the the Majorization-Minimization and the alternating direction method of multipliers algorithm. We present explicitly the generalized algorithm for different types of response variables such as continuous, categorical, and time-to-event data, underlying the Gaussian linear model, logistic regression model, and the Cox proportional hazard model, respectively. We prove the asymptotic property of the proposed method under the Gaussian linear model, which could be extended to the generalized linear model in the exponential family. We build a R package \texttt{smog} to realize the proposed method, and conduct enriched simulation study to demonstrate that our method equips the superior power, compared to the competitors including the Lasso by using the R package \texttt{glmnet}, the group-Lasso without enforcing the hierarchy structure by using the R package \texttt{gglasso}, and the group-Lasso with enforcing the hierarchy structure by using the R package \texttt{glinternet}, etc. We conduct a real case study on head and neck squamous cell carcinoma (HNSCC) by using our method for detecting the predictive genetic biomarkers and subgroup analysis, and the analysis result shows that our method is very promosing in pragmatic. In the future research, we are interested in developing an integrative joint modeling method in biomarker study for discovering the common predictive genome biomarkers for different cancer. 


\bibliographystyle{apalike}
\bibliography{refer2}

\end{document}